\newcommand{\argminF}{\mathop{\mathrm{argmin}}\limits} 
\newtheorem{lemma}{Lemma}
\newtheorem{definition}{Definition}
\newtheorem{remark}{Remark}
\newtheorem{theorem}{Theorem}
\newtheorem{assumption}{Assumption}
\title{\LARGE \bf
Robust Control Barrier Functions for Sampled-Data Systems
}
\author{Pradeep Sharma Oruganti, Parinaz Naghizadeh and Qadeer Ahmed%
\thanks{Pradeep Sharma Oruganti and Qadeer Ahmed are affiliated with the Center for Automotive Research at The Ohio State University. Parinaz Naghizadeh is with the Department of Electrical and Computer Engineering at the University of California,  San Diego. Emails: \{oruganti.6, ahmed.358\}@osu.edu; parinaz@ucsd.edu.}
}
\begin{document}

\maketitle
\thispagestyle{empty}
\pagestyle{empty}

\begin{abstract}

This paper studies the problem of safe control of sampled-data systems under bounded disturbance and measurement errors with piecewise-constant controllers. To achieve this, we first propose the High-Order Doubly Robust Control Barrier Function (HO-DRCBF) for continuous-time systems where the safety enforcing constraint is of relative degree 1 or higher. We then extend this formulation to sampled-data systems with piecewise-constant controllers by bounding the evolution of the system state over the sampling period given a state estimate at the beginning of the sampling period. We demonstrate the proposed approach on a kinematic obstacle avoidance problem for wheeled robots using a unicycle model. We verify that with the proposed approach, the system does not violate the safety constraints in the presence of bounded disturbance and measurement errors. 
\end{abstract}

\section{Introduction}
\label{sec: intro}

There has been an increasing incorporation of connectivity and automation in systems that have been traditionally isolated and mechanical. Autonomous systems in transportation, manufacturing and healthcare are being tasked to perform several safety-critical functions. As such, assurances on their \emph{safe} functioning are expected before widespread implementations, demanding that the safety properties be encoded in the controller design.

In this context, a frequently used notion of safety is that of \emph{set invariance} where the system trajectory never leaves a subset of its state-space (termed the \emph{safe set}) \cite{blanchini1999set}. Control Barrier Functions (CBFs) have grown in popularity for real-time safety critical control given their efficient Quadratic Programming (QP) formulation and their ability to render the safe set forward invariant \cite{ames2019control}. They have been successfully showcased on robotic \cite{cortez2019control}, automotive \cite{ames2016control}, and machine learning applications \cite{dean2021guaranteeing}. 

Some variations of CBFs have also been proposed in the literature. For instance, for many mechanical systems, the actuation happens at the acceleration level (for example, wheel force or joint torque inputs) while the safety requirements are position dependent, making these constraints of \emph{relative degree} two (formally defined in Section \ref{sec: prelims}) \cite{cortez2019control}. To handle such constraints of relative degree greater than 1, high-order variations of CBFs have been proposed \cite{xiao2021high}. Additionally, \emph{robust} variations of CBFs considering worst-case disturbance bounds have been proposed in \cite{jankovic2018robust}. State measurement noise has been addressed in the stochastic perspective in \cite{clark2019control}, and using measurement estimate error bounding in \cite{dean2021guaranteeing}; the latter was recently extended to high-order safety constraints in \cite{oruganti2023safe}. 

While these CBF variations are theoretically sound, they may face an additional challenge during practical implementation: in practice, systems are controlled digitally, with state measurements and control inputs being constant over each sampling time periods \cite{breeden2021control}. Considering this discrepancy, similar to the continuous time variations, discrete-time CBFs \cite{agrawal2017discrete, xiong2022discrete}, CBFs for sampled-data systems \cite{cortez2019control, usevitch2021adversarial, singletary2020control}, and approaches based on approximate discrete-time models \cite{taylor2022safety}, have been developed. However, \emph{robust} variations of CBFs for sampled-data systems remain unexplored.

In this paper, we extend the CBF for sampled-data systems proposed in \cite{breeden2021control} to their robust variation. Specifically, we propose the \emph{Doubly Robust Control Barrier Function} (DRCBF), which guarantees safety of sampled-data systems with piecewise-constant controllers under both (bounded) disturbance noise and (bounded) state measurement errors (hence, \emph{doubly} robust). Our approach is applicable to systems with relative degree 1 as well as high-order safety constraints. We implement our proposed approach on an obstacle avoidance problem for a wheeled robot modeled using the unicycle model, and verify that it can respect safety constraints under both disturbance and measurement errors. Additionally, we reduce conservative behavior of the proposed safe controller by incorporating interval reachability techniques. We further move beyond the existing robust CBF literature by considering a special mis-matched disturbance case where the relative degree of the CBF with respect to the disturbance is one lower than that of the controller. 

The rest of the paper is organized as follows. Section \ref{sec: prelims} presents the background and problem formulation. Our main results are presented in Section \ref{sec: main_results}, and showcased on a numerical example in Section \ref{sec: numerical}. All proofs are provided in the appendix.
\section{Background and Problem Formulation}
\label{sec: prelims}

\paragraph{Preliminaries} A continuous function $\alpha: [0, a] \rightarrow [0, \infty]$ is said to belong to class $\mathcal{K}$ if it is strictly increasing and $\alpha(0)=0$ \cite{khalil2015nonlinear}. Given functions $p(\mathbf{x})$ and $q(\mathbf{x})$, $L_qp(\mathbf{x}) := \nabla p(\mathbf{x}) \cdot q(\mathbf{x})$ is called the \emph{Lie derivative} of $p$ along $q$. The boundary and interior of a closed set $\mathcal{S} \subset \mathbb{R}^n$ are represented by $\partial \mathcal{S}$ and $\textrm{Int}(\mathcal{S})$, respectively. 

\paragraph{Safety and CBFs} Consider the continuous system:
\begin{equation}
    \label{eq: cont_sys}
    \dot{\mathbf{x}} = f\big(\mathbf{x}\big) + g\big(\mathbf{x}\big)\mathbf{u},
\end{equation}
where $\mathbf{x} \in \mathcal{D} \subset \mathbb{R}^n$ represents the system state, $\mathbf{u} \in \mathcal{U} \subset \mathbb{R}^q$ is the control input, with {the set of admissible inputs} $\mathcal{U}$ being a compact set, and $f: \mathbb{R}^n \rightarrow \mathbb{R}^n$, $g: \mathbb{R}^n \rightarrow \mathbb{R}^{n \times {q}}$, are locally Lipschitz continuous functions. 
We use \emph{forward invariance} to formalize the notion of safety for this system.
\begin{definition}[Forward invariant set \cite{xiao2021high}]
    A set of states $\mathcal{S}$ is said to be rendered forward invariant by a controller $u: \mathcal{D} \rightarrow \mathcal{U} $ if starting at $\mathbf{x}(t_0) \in \mathcal{S}$, $\mathbf{x}(t) \in \mathcal{S}$ for all $t > t_0$.
\end{definition}

In this paper, we will use the terms ``safety'' and ``forward invariance'' interchangeably. If the set $\mathcal{S}$ is rendered forward invariant for system \eqref{eq: cont_sys} by some feedback control $\mathbf{u}(t) = k(\mathbf{x}, t)$, locally Lipschitz in $\mathbf{x}$ and piecewise continuous in $t$, then the system is said to be \emph{safe} with respect to $\mathcal{S}$ and is referred to as the \emph{safe set}. We assume that the safe set $\mathcal{S}$ can be represented as the 0-superlevel set of a continuously differentiable function $h: \mathbb{R}^n \rightarrow \mathbb{R}$, i.e., 
\begin{equation}
\label{eq: safe_set}
\mathcal{S} := \{\mathbf{x} \in \mathcal{D} \subset \mathbb{R}^n: h(\mathbf{x}) \geq 0 \}~.
\end{equation}

{Control Barrier Functions (defined below) have emerged as a popular tool for rendering \eqref{eq: safe_set} forward invariant for \eqref{eq: cont_sys}.
\begin{definition}[Control Barrier Function \cite{ames2016control}]
\label{def: cbf} Given a set $\mathcal{S}$ as in \eqref{eq: safe_set}, $h(\mathbf{x})$ is a Control Barrier Function (CBF) if there exists a class $\mathcal{K}$ function $\alpha$ such that,
\begin{equation}
\label{eq: cbf-def}
    \sup_{\mathbf{u} \in \mathcal{U}} \Big[ L_fh(\mathbf{x}) + L_gh(\mathbf{x})\mathbf{u} + \alpha(h(\mathbf{x})) \Big] > 0, ~ \forall \mathbf{x} \in \mathcal{S}.
\end{equation}
\end{definition}} 

\begin{theorem}[\cite{ames2019control}] Given a CBF $h(\mathbf{x})$ and set $\mathcal{S}$ defined as \eqref{eq: safe_set}, any Lipschitz continuous controller $\mathbf{u} \in \mathcal{U}$ that satisfies \eqref{eq: cbf-def} renders the set $\mathcal{S}$ forward invariant for system (\ref{eq: cont_sys}).
\end{theorem}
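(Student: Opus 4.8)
The plan is to pass from the pointwise inequality \eqref{eq: cbf-def} to a scalar differential inequality along closed-loop trajectories and then invoke a comparison principle. First I would close the loop: since the controller $\mathbf{u}=k(\mathbf{x})$ is locally Lipschitz in $\mathbf{x}$ and $f,g$ are locally Lipschitz, the closed-loop vector field $\mathbf{x}\mapsto f(\mathbf{x})+g(\mathbf{x})k(\mathbf{x})$ is locally Lipschitz, so by Picard--Lindel\"of, for any $\mathbf{x}(t_0)\in\mathcal{S}$ there is a unique maximal solution $\mathbf{x}(\cdot)$ on some interval $[t_0,t_{\max})$. Next, along this solution define $\eta(t):=h(\mathbf{x}(t))$; since $h\in C^1$, $\eta$ is differentiable with $\dot\eta(t)=L_fh(\mathbf{x}(t))+L_gh(\mathbf{x}(t))k(\mathbf{x}(t))$. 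Because $k$ satisfies \eqref{eq: cbf-def} pointwise, whenever $\mathbf{x}(t)\in\mathcal{S}$ (equivalently $\eta(t)\ge 0$) we obtain the differential inequality $\dot\eta(t)\ge -\alpha(\eta(t))$.

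I would then apply the comparison lemma. Let $y(\cdot)$ solve $\dot y=-\alpha(y)$ with $y(t_0)=\eta(t_0)\ge 0$. Since $\alpha\in\mathcal{K}$ and $\alpha(0)=0$, the origin is an equilibrium of this scalar system, so $y(t)\ge 0$ for all $t\ge t_0$; more precisely there is a class-$\mathcal{KL}$ function $\beta$ with $y(t)=\beta(y(t_0),t-t_0)\ge 0$. The comparison lemma gives $\eta(t)\ge y(t)\ge 0$ on $[t_0,t_{\max})$, hence $h(\mathbf{x}(t))\ge 0$ and $\mathbf{x}(t)\in\mathcal{S}$ throughout the interval of existence, which is the claimed forward invariance. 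If one additionally wants completeness of the solution, one notes that confinement to $\mathcal{S}$ (or to a compact sublevel set of $h$) precludes finite-time blow-up, so $t_{\max}=\infty$.

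The main obstacle is the regularity of $\alpha$: a generic class-$\mathcal{K}$ function need not be locally Lipschitz, so $\dot y=-\alpha(y)$ may fail to have a unique solution and the textbook comparison lemma does not apply verbatim. I would resolve this by first replacing $\alpha$ with a locally Lipschitz (extended) class-$\mathcal{K}$ function $\tilde\alpha$ with $\tilde\alpha\le\alpha$ on the relevant range -- this only strengthens the inequality to $\dot\eta\ge-\tilde\alpha(\eta)$ and makes the comparison system well-posed -- or, alternatively, by a direct Nagumo-type argument: on $\partial\mathcal{S}$ we have $h(\mathbf{x})=0$, so $\alpha(h(\mathbf{x}))=0$ and \eqref{eq: cbf-def} forces $\dot h\ge 0$ there, i.e., the closed-loop vector field is subtangential to $\mathcal{S}$ along its boundary, which is exactly Nagumo's condition for forward invariance of $\mathcal{S}$. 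A minor additional point to verify is that the trajectory stays in the domain $\mathcal{D}$ on which $h,f,g$ are defined; this is automatic when $\mathcal{S}$ is compact or assumed to lie in $\mathrm{Int}(\mathcal{D})$.
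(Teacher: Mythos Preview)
The paper does not supply its own proof of this theorem; it is stated as a quoted result from \cite{ames2019control} and left unproved. Your proposal is correct and follows essentially the standard argument used in that reference (and in the broader CBF literature): close the loop to get a locally Lipschitz vector field, derive the scalar differential inequality $\dot h\ge -\alpha(h)$ along trajectories, and conclude either via a comparison lemma or, equivalently, via Nagumo's subtangentiality condition on $\partial\mathcal{S}$. Your remarks about the regularity of $\alpha$ and the workaround (dominate by a locally Lipschitz $\tilde\alpha$, or bypass comparison entirely with Nagumo) are also on point and mirror how these technicalities are handled in the source literature.
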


\paragraph{Sampled-data systems} While these conditions have been introduced for continuous time systems, in practice, the systems are run by sampling the system state at fixed (or varying) discrete time steps $T = t_{k+1} - t_k$ and using Zero-Order Hold (ZOH) based digital controllers, i.e., $\mathbf{x}(t)=\mathbf{x}_k$, $\mathbf{u}(t) = \mathbf{u}_k, ~\forall t \in [t_k, t_{k+1})$. In such scenarios, the following result from \cite{cortez2019control} ensures the forward invariance of $\mathcal{S}$.
\begin{theorem}[\cite{cortez2019control}, Thm.~2]
For a safe set $\mathcal{S}$ defined as \eqref{eq: safe_set}, assume that the functions $L_fh$, $L_gh$, and $\alpha \circ h$ are Lipschitz continuous with Lipschitz constants $l_{L_fh}$, $l_{L_gh}$, and $l_{\alpha \circ h} \in \mathbb{R}_{\geq 0}$. At the sampled state $\mathbf{x}_k$, a piecewise-constant control input $\mathbf{u}_k \in \mathcal{U}$ satisfying: 
\begin{equation}
    \sup_{\mathbf{u}_k \in \mathcal{U}} \Big[ L_fh(\mathbf{x}_k) + L_gh(\mathbf{x}_k)\mathbf{u}_k + \alpha(h(\mathbf{x}_k)) - \frac{l_1}{l_2}(e^{l_2 T} - 1)\Big] \geq 0 
\end{equation}
where $l_1 = (l_{L_fh} + l_{L_gh} u_{\textrm{max}} + l_{\alpha \circ h}) \sup_{\mathbf{x} \in \mathcal{D}, \mathbf{u}_k \in \mathcal{U}} \lVert f(\mathbf{x}) + g(\mathbf{x})\mathbf{u}_k \rVert$ and $l_2 = l_{L_fh} + l_{L_gh} u_{\textrm{max}}$ ensures that $\mathcal{S}$ is forward invariant.
\end{theorem}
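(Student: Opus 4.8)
The plan is to reduce the claim to a single sampling interval and then propagate it by induction on $k$. It suffices to prove the following: if $\mathbf{x}(t_k)=\mathbf{x}_k\in\mathcal{S}$, i.e.\ $h(\mathbf{x}_k)\ge 0$, and the piecewise-constant input $\mathbf{u}_k\in\mathcal{U}$ satisfies the displayed inequality at $\mathbf{x}_k$, then $h(\mathbf{x}(t))\ge 0$ for all $t\in[t_k,t_{k+1}]$, so in particular $\mathbf{x}_{k+1}\in\mathcal{S}$. Starting from $\mathbf{x}(t_0)\in\mathcal{S}$ and chaining this over $k=0,1,2,\dots$ then yields $\mathbf{x}(t)\in\mathcal{S}$ for all $t\ge t_0$, which is forward invariance. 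As in the cited references I treat it as a standing assumption (consistent with, and locally guaranteed by, the conclusion) that the trajectory stays in $\mathcal{D}$, where the three Lipschitz constants and the supremum defining $l_1$ are valid.

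Fix one interval $[t_k,t_{k+1}]$ with $h(\mathbf{x}_k)\ge 0$. Since the input is frozen at $\mathbf{u}_k$, the closed loop there is the autonomous, locally Lipschitz system $\dot{\mathbf{x}}=F_k(\mathbf{x}):=f(\mathbf{x})+g(\mathbf{x})\mathbf{u}_k$, which admits a unique absolutely continuous solution. I would set $\rho(t):=L_fh(\mathbf{x}(t))+L_gh(\mathbf{x}(t))\mathbf{u}_k+\alpha(h(\mathbf{x}(t)))$, which along this trajectory equals $\tfrac{d}{dt}h(\mathbf{x}(t))+\alpha(h(\mathbf{x}(t)))$; the sampled-data condition says precisely $\rho(t_k)\ge \tfrac{l_1}{l_2}(e^{l_2T}-1)$, and the heart of the proof is to show $\rho(t)\ge 0$ for all $t\in[t_k,t_{k+1}]$. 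To do so I would bound the inter-sample drift of $\rho$. Viewed as a function of $\mathbf{x}$ with $\mathbf{u}_k$ fixed, $\rho$ is Lipschitz with constant at most $l_{L_fh}+l_{L_gh}u_{\textrm{max}}+l_{\alpha\circ h}$ (using $\lVert\mathbf{u}_k\rVert\le u_{\textrm{max}}$ together with the hypothesized Lipschitz constants of $L_fh$, $L_gh$, $\alpha\circ h$), while $\lVert\mathbf{x}(t)-\mathbf{x}_k\rVert=\bigl\lVert\int_{t_k}^{t}F_k(\mathbf{x}(s))\,ds\bigr\rVert\le (t-t_k)\,\sup_{\mathbf{x}\in\mathcal{D},\,\mathbf{u}_k\in\mathcal{U}}\lVert f(\mathbf{x})+g(\mathbf{x})\mathbf{u}_k\rVert$. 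Multiplying these and recalling the definition of $l_1$ gives $\lvert\rho(t)-\rho(t_k)\rvert\le l_1(t-t_k)\le l_1T$; and since $e^{l_2T}-1\ge l_2T$ we have $l_1T\le \tfrac{l_1}{l_2}(e^{l_2T}-1)\le \rho(t_k)$, whence $\rho(t)\ge \rho(t_k)-\lvert\rho(t)-\rho(t_k)\rvert\ge 0$ on the whole interval. (If one prefers to land exactly on the $e^{l_2T}$ expression, a Gr\"onwall estimate of $\lVert\mathbf{x}(t)-\mathbf{x}_k\rVert$ recovers it, but the linear bound above is already enough.)

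It then remains to convert $\rho(t)\ge 0$ into $h(\mathbf{x}(t))\ge 0$. On $[t_k,t_{k+1}]$ we now have $\tfrac{d}{dt}h(\mathbf{x}(t))\ge -\alpha(h(\mathbf{x}(t)))$ with $h(\mathbf{x}_k)\ge 0$, so comparison with the scalar initial value problem $\dot y=-\alpha(y)$, $y(t_k)=h(\mathbf{x}_k)$ --- whose forward solution is nonnegative --- yields $h(\mathbf{x}(t))\ge y(t)\ge 0$ by the comparison lemma, which finishes the per-interval claim and hence the theorem. I expect the main obstacle to be the second step: producing an inter-sample deviation bound that is at once rigorous and tight enough that the margin $\tfrac{l_1}{l_2}(e^{l_2T}-1)$ is not entirely consumed --- this is exactly where the Lipschitz bookkeeping and the linear (or Gr\"onwall) growth estimate for $\lVert\mathbf{x}(t)-\mathbf{x}_k\rVert$ carry the argument. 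A minor secondary issue is the comparison step when $\alpha$ is only class $\mathcal{K}$ rather than locally Lipschitz; this is handled either by the usual assumption of local Lipschitzness of $\alpha$ in the CBF literature, or by the elementary observation that, after extending $\alpha$ continuously and strictly increasingly through the origin, $\tfrac{d}{dt}h(\mathbf{x}(t))\ge -\alpha(h(\mathbf{x}(t)))>0$ wherever $h(\mathbf{x}(t))<0$, so $h$ can never cross zero downward.
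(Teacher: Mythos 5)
Your argument is correct. Note that the paper does not actually prove this statement --- it is quoted as background from \cite{cortez2019control} with no proof reproduced in the appendix --- so the comparison is against the standard argument for that result rather than against anything in this manuscript. Your route is the standard one (freeze $\mathbf{u}_k$, bound the inter-sample drift of $\rho(t)=L_fh+L_gh\,\mathbf{u}_k+\alpha(h)$ via the Lipschitz constants and the state deviation, then close with the comparison lemma), with one genuine simplification: you bound $\lVert\mathbf{x}(t)-\mathbf{x}_k\rVert$ directly by $(t-t_k)\,\Delta$ rather than running Gr\"onwall on a differential inequality for $\rho$ itself, which is how the exponential margin $\tfrac{l_1}{l_2}(e^{l_2T}-1)$ arises in the original derivation. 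Your version shows the stated margin is simply conservative slack over the sharper linear bound $l_1T$ (via $e^{l_2T}-1\ge l_2T$), so the theorem as stated follows a fortiori; this buys a shorter proof and makes explicit that the sufficient condition could be tightened. The only points worth flagging, both of which you already acknowledge, are the standing assumption that the trajectory remains in $\mathcal{D}$ so that the Lipschitz constants and the supremum defining $l_1$ apply along the whole solution, and the degenerate case $l_2=0$, where $\tfrac{l_1}{l_2}(e^{l_2T}-1)$ must be read as its limit $l_1T$; neither is a gap in substance.
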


\paragraph{High order control barrier functions} As noted in Section \ref{sec: intro}, there exist many practical scenarios where the \emph{relative degree} (defined below) of the safety constraints are some $m > 1$, i.e., $L_gL_f^{r}h(\mathbf{x}) = 0,~\forall r=\{0, 1, \ldots,m-1\}$. 
\begin{definition} [Relative degree \cite{khalil2015nonlinear}]
The relative degree of a continuously differentiable function $h: \mathbb{R}^n \rightarrow \mathbb{R}$ with respect to system (\ref{eq: cont_sys}) is the number of times, $m$, we need to differentiate $h$ until $\mathbf{u}$ shows up explicitly in $h^{(m)}$. 
\end{definition}

High-Order Control Barrier Functions (HOCBF) have been proposed to handle such scenarios.  
\begin{definition}[HOCBF \cite{xiao2021high}]
\label{def: ho_cbf}
    Consider the system \eqref{eq: cont_sys}, and $\mathcal{S}$ defined as \eqref{eq: safe_set} with $h$ being a sufficiently smooth continuous function of relative degree $m$. Let $\{\mathcal{S}_i\}_{i = 1}^m$ be a collection of sets of the form $\mathcal{S}_i := \{\mathbf{x} \in \mathcal{D} : \psi_{i-1}(\mathbf{x}) \geq 0\}$, where $\psi_0 := h(\mathbf{x})$ and $\psi_i := \dot{\psi}_{i-1}(\mathbf{x}) + \alpha_i(\psi_{i-1}(\mathbf{x}))$, for all $i \in \{1, \ldots , r\}$. Then, $h$ is a high-order control barrier function (HOCBF) on $\mathcal{D} \supset \cap_{i=1}^m \mathcal{S}_i$ if there exists a collection of class-$\mathcal{K}$ functions $\{\alpha\}_{i=1}^m$ such that for all $\mathbf{x} \in \mathcal{D}$, $\sup_{u \in \mathcal{U}} \psi_r \geq 0$.
\end{definition}

With this definition, the following theorem proves the forward invariance of $\cap_{i=1}^m \mathcal{S}_i$:
\begin{theorem}[\cite{xiao2021high}]
\label{th: ho_cbf}
    If $h$ is a HOCBF on $\mathcal{D} \subset \mathbb{R}^n$, then any locally Lipschitz continuous controller $\mathbf{u} \in \mathcal{U}$ such that $\psi_r \geq 0$ renders $\cap_{i=1}^m \mathcal{S}_i$ forward invariant for system \eqref{eq: cont_sys}.
\end{theorem}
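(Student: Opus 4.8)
The plan is to peel off one layer of the HOCBF hierarchy at a time, descending from $\psi_{m-1}$ down to $\psi_0 = h$, and to control each layer with the comparison lemma. Fix an initial condition $\mathbf{x}(t_0) \in \cap_{i=1}^m \mathcal{S}_i$ and a locally Lipschitz controller $\mathbf{u} \in \mathcal{U}$ with $\psi_m(\mathbf{x}(t)) \geq 0$ along the resulting (locally unique) trajectory, defined on its maximal interval of existence $[t_0, t_{\max})$ in $\mathcal{D}$. Since $\mathbf{x}(t_0) \in \mathcal{S}_i$ for each $i$, we have $\psi_{i-1}(\mathbf{x}(t_0)) \geq 0$ for all $i \in \{1,\ldots,m\}$. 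The idea is that the constraint $\psi_m = \dot{\psi}_{m-1} + \alpha_m(\psi_{m-1}) \geq 0$ is precisely the differential inequality $\dot{\psi}_{m-1} \geq -\alpha_m(\psi_{m-1})$ along the trajectory, which, combined with the nonnegative initial value, should force $\psi_{m-1}(\mathbf{x}(t)) \geq 0$ for all $t$.

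First I would establish the ``top'' layer: compare $t \mapsto \psi_{m-1}(\mathbf{x}(t))$ with the scalar initial value problem $\dot{y} = -\alpha_m(y)$, $y(t_0) = \psi_{m-1}(\mathbf{x}(t_0)) \geq 0$. Because $\alpha_m$ is class $\mathcal{K}$ (so $\alpha_m(0) = 0$ and $\alpha_m$ is increasing), the solution $y(\cdot)$ is nonincreasing and bounded below by the equilibrium at the origin, hence $y(t) \geq 0$ for all $t \geq t_0$; the comparison lemma then yields $\psi_{m-1}(\mathbf{x}(t)) \geq y(t) \geq 0$ on $[t_0,t_{\max})$, i.e., $\mathbf{x}(t) \in \mathcal{S}_m$ throughout. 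For the inductive step, suppose $\psi_j(\mathbf{x}(t)) \geq 0$ on $[t_0,t_{\max})$ for some $1 \leq j \leq m-1$. From $\psi_j = \dot{\psi}_{j-1} + \alpha_j(\psi_{j-1})$ we get $\dot{\psi}_{j-1} = \psi_j - \alpha_j(\psi_{j-1}) \geq -\alpha_j(\psi_{j-1})$, and $\psi_{j-1}(\mathbf{x}(t_0)) \geq 0$; the identical comparison argument gives $\psi_{j-1}(\mathbf{x}(t)) \geq 0$, i.e., $\mathbf{x}(t) \in \mathcal{S}_j$. Iterating down to $j = 1$ produces $\psi_0(\mathbf{x}(t)) = h(\mathbf{x}(t)) \geq 0$, and collecting the inequalities $\psi_{i-1}(\mathbf{x}(t)) \geq 0$ for $i = 1,\ldots,m$ shows $\mathbf{x}(t) \in \cap_{i=1}^m \mathcal{S}_i$ for all $t \in [t_0, t_{\max})$, which is the asserted forward invariance.

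The main obstacle is rigor in the comparison step: the textbook comparison lemma presumes the scalar ODE $\dot{y} = -\alpha_i(y)$ has a unique solution, whereas a generic class-$\mathcal{K}$ function need not be locally Lipschitz at the origin. I would handle this in one of three ways, and I expect the cleanest for the appendix is the last: (i) assume, as is standard in the HOCBF literature, that each $\alpha_i$ is locally Lipschitz; (ii) invoke the version of the comparison lemma for merely continuous right-hand sides that compares against the maximal solution; or (iii) give a direct ``last exit time'' contradiction — if $\psi_{i-1}(\mathbf{x}(t_1)) < 0$ for some $t_1$, set $\tau = \sup\{t \leq t_1 : \psi_{i-1}(\mathbf{x}(t)) \geq 0\}$, so $\psi_{i-1}(\mathbf{x}(\tau)) = 0$ while $\psi_{i-1}(\mathbf{x}(t)) < 0$ on $(\tau, t_1]$, on which interval (extending each $\alpha_i$ to negative arguments as an extended class-$\mathcal{K}$ function, as the HOCBF construction implicitly requires) $\dot{\psi}_{i-1} \geq -\alpha_i(\psi_{i-1}) > 0$, contradicting $\psi_{i-1}$ decreasing below zero. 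A secondary, minor point is ensuring $\mathbf{x}(t)$ remains in $\mathcal{D}$ so that all $\psi_i$ and their derivatives are well defined; this is dispatched by carrying the argument out on the maximal interval of existence and noting that the bounds obtained are uniform in $t$.
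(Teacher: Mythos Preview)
The paper does not prove this theorem: it is quoted verbatim from \cite{xiao2021high} and used only as a black box (see the appendix, where the proofs of Lemmas~\ref{lemma: ho_dmr_cbf_ct} and~\ref{lemma: ho_dmr_cbf_sd} begin with ``From Theorem~\ref{th: ho_cbf}, it is sufficient to show that $\psi_m \geq 0$''). So there is no in-paper proof to compare against.

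That said, your argument is correct and is precisely the cascaded comparison-lemma argument used in \cite{xiao2021high}: the nonnegativity of $\psi_m$ gives the differential inequality $\dot{\psi}_{m-1} \geq -\alpha_m(\psi_{m-1})$, which with $\psi_{m-1}(\mathbf{x}(t_0)) \geq 0$ forces $\psi_{m-1} \geq 0$, and one descends inductively to $\psi_0 = h$. Your discussion of the regularity issue for $\alpha_i$ is also on point; the original reference sidesteps it by working with locally Lipschitz (indeed, often linear) class-$\mathcal{K}$ functions, which is your option (i), while your option (iii) is the cleanest self-contained fix. Nothing is missing.
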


\subsection{Problem formulation}
Consider the following continuous system:
\begin{equation}
    \label{eq: d_cont_sys}
    \dot{\mathbf{x}} = f\big(\mathbf{x}\big) + g\big(\mathbf{x}\big)\mathbf{u} + p(\mathbf{x})\mathbf{d}~,
\end{equation}
where $\mathbf{d} \in \mathcal{P} \subset \mathbb{R}^v$ is some external disturbance, and $p: \mathbb{R}^n \rightarrow \mathbb{R}^{n \times v}$ is locally Lipschitz continuous. We make the following assumption about the disturbance.
\begin{assumption}
\label{asmp: bounded_disturbance}
The set $\mathcal{P}$ is compact and there exists a $\gamma > 0$ such that $ \max_{\mathbf{d} \in \mathcal{P}} \lVert \mathbf{d} \rVert \leq \gamma$. 
\end{assumption}

Assumption \ref{asmp: bounded_disturbance} is typical in practical systems, and imposes a bounded disturbance $\mathbf{d}$ on the system \cite{cortez2019control}. {The \emph{Robust-CBF} was proposed in \cite{jankovic2018robust} to ensure safety of continuous-time systems with disturbance bounded by a known constant}. Additionally, assuming matched disturbance, i.e., the disturbance relative degree is the same as the input relative degree, \cite{cohen2022high, tan2021high} proposed variations of the HOCBF that are robust to disturbance. 

In this paper, in addition to similarly considering the disturbance $\mathbf{d}$, we  account for an uncertainty in the sensor measurements that are used to estimate the system state. Specifically, given a sensor measurement $\mathbf{z} \in \mathbb{R}^l$, we assume that {the controller} has access to an imperfect estimate $\hat{\mathbf{x}} := \mathbf{x} + \mathbf{e}(\mathbf{x})$ of the true state $\mathbf{x}$, where $\mathbf{e} \in \mathcal{E}(\mathbf{z})$ represents the measurement errors. 
\begin{assumption}
\label{asmp: bounded_meas_error}
$\mathcal{E}(\mathbf{z})$ is compact, and there exists $\epsilon(\mathbf{z})$ such that $\max_{\mathbf{e} \in \mathcal{E}(\mathbf{z})} \lVert \mathbf{e} \rVert \leq \epsilon(\mathbf{z})$. 
\end{assumption}

Assumption \ref{asmp: bounded_meas_error} implies bounded errors in state measurement. With the estimated states, the evolution of the closed-loop system is given by:
\begin{equation}
\label{eq: cl_estim_sys}
    \dot{\mathbf{x}} = f(\mathbf{x}) + g(\mathbf{x})k(\mathbf{z}, \hat{\mathbf{x}}_k) + p(\mathbf{x})\mathbf{d}.
\end{equation} 
In this paper, given a safe-set $\mathcal{S}$ defined as \eqref{eq: safe_set} with a sufficiently smooth continuous function $h$, and a constant sampling time-period $T$, we look to design a closed-loop, piecewise-constant (ZOH) controller $\mathbf{u}_k = k(\mathbf{z}, \hat{\mathbf{x}}_k), ~\forall t \in [t_k, t_k+1)$, that ensures the forward invariance of \eqref{eq: cl_estim_sys} with respect to $\mathcal{S}$. {We do this for cases where the relative degree of $h$ with respect to the input is some $m \geq 1$}. 
\section{Main Results}
\label{sec: main_results}

In this section, we propose the High Order Doubly Robust Control Barrier Functions (HO-DRCBF), i.e., robust to both external disturbance and measurement errors, for systems whose relative degree with respect to $h$ is some $m \geq 1$. We first propose the HO-DRCBF for continuous-time systems and then extend this formulation to the sampled-data systems with piece-wise constant controllers. Note that we obtain the Doubly Robust CBF (DRCBF) when the relative degree of the systems with respect to $h$ is 1 by simply applying $m = 1$ in the proposed results (see appendix for detailed discussion). 

Following Definition \ref{def: ho_cbf}, we observe that the system may have differing Disturbance Relative Degree ($\textrm{DRD}$) and Input Relative Degree ($\textrm{IRD}$) with respect to $h$. We first look at the case when $\textrm{DRD} = \textrm{IRD} = m$ (also known as the matched disturbance case). This assumption implies that the external disturbance does not appear before the control input when taking higher order derivatives of $h$. A second case, when this uncertainty appears at one derivative of $h$ lower than the derivative where the control input appears, i.e., $\textrm{IRD} - \textrm{DRD} = 1$, is briefly discussed in Section \ref{sec: mismatached_disturbance}. 

\subsection{For continuous time systems}
For this case, we make an additional assumption:

\begin{assumption}
\label{asmp: lips_conditions} 
The function $\psi_{m-1}$ is locally Lipschitz continuous. 
\end{assumption}

The above assumption helps in generating a lower bound in the HOCBF constraint to ensure invariance considering disturbance and measurement errors. With this assumption, we now introduce the HO-DRCBF:

\begin{definition}[HO-DRCBF]
\label{def: ho_dmr_cbf}
Consider the closed-loop system \eqref{eq: cl_estim_sys}, and a collection of sets $\{\mathcal{S}\}_{i=1}^m$ with functions $\psi_0 := h(\mathbf{x})$ and $\{\psi_i\}_{i=1}^m$ defined as in Definition \ref{def: ho_cbf}. If Assumptions \ref{asmp: bounded_disturbance}, \ref{asmp: bounded_meas_error}, and \ref{asmp: lips_conditions} hold, the sufficiently smooth function $h: \mathbb{R}^n \rightarrow \mathbb{R}$ is a \emph{high-order doubly robust CBF} if there exists a control input $\mathbf{u} \in \mathcal{U}$ such that:
\begin{multline}
\label{eq: ho_dmr_cbf_condition}
    \sup_{\mathbf{u} \in \mathcal{U}} \Big[ L_f\psi_{m-1}(\hat{\mathbf{x}}) + L_g\psi_{m-1}(\hat{\mathbf{x}})\mathbf{u} 
    - (a_i(\mathbf{z}) + b(\mathbf{z})\lVert \mathbf{u} \rVert) \\ + \alpha_{m-1}(\psi_{m-1}(\hat{\mathbf{x}}))
    - \lVert L_p\psi_{m-1}(\mathbf{x}) \rVert \gamma \Big] \geq 0~. 
\end{multline}
\end{definition}

Following Definition~\ref{def: ho_dmr_cbf}, denote the set of control inputs satisfying \eqref{eq: ho_dmr_cbf_condition} by $K_{\textrm{ho-dr}}$. Our first result below shows that any Lipschitz continuous controller $\mathbf{u}^* \in K_{\textrm{ho-dr}}$ renders the safe set $\mathcal{S}$ forward invariant in the presence of disturbance $\mathbf{d}$ and measurement noise $\mathbf{e}$. 

\begin{lemma}
\label{lemma: ho_dmr_cbf_ct}
Consider the collection of sets $\{\mathcal{S}\}_{i=1}^m$ with functions $\psi_0 := h(\mathbf{x})$ and $\{\psi_i\}_{i=1}^m$ defined as in Definition \ref{def: ho_cbf}. Given Assumptions \ref{asmp: bounded_disturbance}, \ref{asmp: bounded_meas_error}, and \ref{asmp: lips_conditions} hold, and if $h$ is a HO-DRCBF for system (\ref{eq: cl_estim_sys}) with parameter functions $(a(\mathbf{z}), b(\mathbf{z})) := (\epsilon(\mathbf{z})(l_{L_f\psi_{r-1}} + l_{\alpha_{r-1} \circ \psi_{r-1}}),\epsilon(\mathbf{z})(l_{L_g\psi_{r-1}}))$, where $l_{L_f\psi_{r-1}}$, $l_{\alpha_{r-1} \circ \psi_{r-1}}$, and $l_{L_g\psi_{r-1}}$ are the Lipschitz constants of $L_f\psi_{r-1}$, $\alpha_{r-1} \circ \psi_{r-1}$, and $L_g\psi_{r-1}$ respectively, then any locally Lipschitz continuous controller $\mathbf{u} \in K_{\textrm{ho-dr}}$ renders $\cap_{i=1}^m \mathcal{S}_i$ forward invariant.
\end{lemma}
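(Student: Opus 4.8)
The plan is to reduce the doubly-robust, continuous-time claim to the known HOCBF forward-invariance result (Theorem~\ref{th: ho_cbf}) by showing that along the true closed-loop trajectory \eqref{eq: cl_estim_sys} the quantity $\psi_m$ evaluated at the \emph{true} state stays nonnegative whenever the controller satisfies the estimate-based constraint \eqref{eq: ho_dmr_cbf_condition}. Concretely, I would first recall that $\psi_m(\mathbf{x},\mathbf{u},\mathbf{d}) = L_f\psi_{m-1}(\mathbf{x}) + L_g\psi_{m-1}(\mathbf{x})\mathbf{u} + L_p\psi_{m-1}(\mathbf{x})\mathbf{d} + \alpha_m(\psi_{m-1}(\mathbf{x}))$, using that under the matched-disturbance assumption $\textrm{DRD}=\textrm{IRD}=m$ the disturbance $\mathbf{d}$ first appears precisely at the $m$-th derivative, so $L_p\psi_{i}\equiv 0$ for $i<m-1$ and $\psi_1,\dots,\psi_{m-1}$ are the usual undisturbed auxiliary functions. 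The goal is then to prove $\psi_m(\mathbf{x},\mathbf{u},\mathbf{d})\ge 0$ for the realized $\mathbf{d}\in\mathcal{P}$, from which Theorem~\ref{th: ho_cbf}'s argument (the comparison-lemma chain on $\psi_{m-1},\dots,\psi_0$) gives forward invariance of $\cap_{i=1}^m\mathcal S_i$.

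The core estimate is a triangle-inequality split of the difference between the true-state expression and the estimate-state expression appearing in \eqref{eq: ho_dmr_cbf_condition}. I would write
\begin{multline*}
\psi_m(\mathbf{x},\mathbf{u},\mathbf{d}) \ge L_f\psi_{m-1}(\hat{\mathbf{x}}) + L_g\psi_{m-1}(\hat{\mathbf{x}})\mathbf{u} + \alpha_m(\psi_{m-1}(\hat{\mathbf{x}})) \\
- \bigl| L_f\psi_{m-1}(\mathbf{x}) - L_f\psi_{m-1}(\hat{\mathbf{x}}) \bigr| - \bigl\| L_g\psi_{m-1}(\mathbf{x}) - L_g\psi_{m-1}(\hat{\mathbf{x}}) \bigr\|\,\lVert\mathbf{u}\rVert \\
- \bigl| \alpha_m(\psi_{m-1}(\mathbf{x})) - \alpha_m(\psi_{m-1}(\hat{\mathbf{x}})) \bigr| - \bigl\| L_p\psi_{m-1}(\mathbf{x}) \bigr\|\,\lVert\mathbf{d}\rVert.
\end{multline*}
(Here I match the paper's indexing $\alpha_{m-1}\leftrightarrow\alpha_m$; I would make the index convention consistent in the final write-up.) Then I apply Assumption~\ref{asmp: lips_conditions} together with the Lipschitz constants $l_{L_f\psi_{m-1}}, l_{L_g\psi_{m-1}}, l_{\alpha_{m-1}\circ\psi_{m-1}}$ and the bound $\lVert\hat{\mathbf{x}}-\mathbf{x}\rVert=\lVert\mathbf{e}\rVert\le\epsilon(\mathbf{z})$ from Assumption~\ref{asmp: bounded_meas_error} to dominate the three mismatch terms by $a(\mathbf{z}) + b(\mathbf{z})\lVert\mathbf{u}\rVert$ with exactly the $(a,b)$ given in the lemma statement, and Assumption~\ref{asmp: bounded_disturbance} to bound $\lVert L_p\psi_{m-1}(\mathbf{x})\rVert\lVert\mathbf{d}\rVert \le \lVert L_p\psi_{m-1}(\mathbf{x})\rVert\gamma$. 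The right-hand side is then exactly the bracketed expression in \eqref{eq: ho_dmr_cbf_condition}, which is $\ge 0$ by the HO-DRCBF property for the chosen $\mathbf{u}=\mathbf{u}^*$; hence $\psi_m\ge 0$ along the trajectory.

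Having established $\psi_m\ge 0$ pointwise along the closed loop, I would invoke the HOCBF machinery: $\psi_m\ge0$ means $\dot\psi_{m-1} \ge -\alpha_m(\psi_{m-1})$, so by the comparison lemma $\psi_{m-1}(\mathbf{x}(t))\ge0$ for all $t$ given $\psi_{m-1}(\mathbf{x}(t_0))\ge0$; this then propagates down the chain $\psi_{m-1}\ge0\Rightarrow\dot\psi_{m-2}\ge-\alpha_{m-1}(\psi_{m-2})\Rightarrow\psi_{m-2}\ge0\Rightarrow\cdots\Rightarrow h=\psi_0\ge0$, i.e. exactly the statement of Theorem~\ref{th: ho_cbf} applied to the realized (disturbed, estimate-driven) trajectory, giving forward invariance of $\cap_{i=1}^m\mathcal S_i$. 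The main obstacle I anticipate is not the inequality chain itself but the regularity/well-posedness bookkeeping: the controller $\mathbf{u}=k(\mathbf{z},\hat{\mathbf{x}})$ depends on the noisy estimate, so the closed-loop vector field is only locally Lipschitz in $\mathbf{x}$ under the stated assumptions (local Lipschitzness of $\psi_{m-1}$, locally Lipschitz $f,g,p$, and a measurable-selection or continuity argument for the safe controller), and one must ensure existence of a solution on which the derivative identity $\dot\psi_{m-1}=L_f\psi_{m-1}+L_g\psi_{m-1}\mathbf{u}+L_p\psi_{m-1}\mathbf{d}$ holds almost everywhere and that the comparison lemma applies — this is the point where I would be most careful, and where restricting to a forward-invariant compact subset of $\mathcal D$ (so the Lipschitz constants are global on that set) makes the argument rigorous.
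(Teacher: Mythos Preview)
Your proposal is correct and follows essentially the same approach as the paper: reduce to showing $\psi_m\ge 0$ along the true trajectory (so Theorem~\ref{th: ho_cbf} applies), bound the disturbance contribution $L_p\psi_{m-1}(\mathbf{x})\mathbf{d}\ge -\lVert L_p\psi_{m-1}(\mathbf{x})\rVert\gamma$, and absorb the true-state/estimate mismatch via the Lipschitz constants of $L_f\psi_{m-1}$, $L_g\psi_{m-1}$, $\alpha_{m-1}\circ\psi_{m-1}$ together with $\lVert\mathbf{x}-\hat{\mathbf{x}}\rVert\le\epsilon(\mathbf{z})$. The paper's proof is terser---it cites the relative-degree-one lemma and \cite{oruganti2023safe} for the Lipschitz step and invokes Theorem~\ref{th: ho_cbf} directly rather than spelling out the comparison-lemma cascade---but the content is the same.
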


Observe that when there are no disturbances ($\mathbf{d} = \mathbf{0}$) and measurement errors ($\epsilon(\mathbf{z}) = 0$), we recover the original HOCBF condition \cite{xiao2021high}.  Compared to the HO-MR-CBF proposed in \cite{oruganti2023safe}, we observe that an additional (negative) disturbance term $\lVert L_ph \rVert \gamma$ shows up in our inequality. Notice that when compared to the scenario without the two sources of uncertainties, it can be seen that the controller now has to act conservatively to ensure that the left-hand side of \eqref{eq: ho_dmr_cbf_condition} is not negative (i.e., system state remains in $\mathcal{S}$). The safe control input can be obtained from the following optimization:
\begin{equation}
\label{eq: ho_dmr_cbf_optim}
    \begin{split}
    & K_{\text{safe}}(\hat{\mathbf{x}}) = \argminF_{\mathbf{u} \in \mathcal{U}} \frac{1}{2} \lVert \mathbf{u} - K_{\text{perf}}(\hat{\mathbf{x}}) \rVert^2 \\
    & \text{  s.t.  } L_f\psi_{m-1}(\hat{\mathbf{x}}) + L_g\psi_{m-1}(\hat{\mathbf{x}})\mathbf{u}
    - (a_i(\mathbf{z}) + b(\mathbf{z})\lVert \mathbf{u} \rVert) \\ 
    & + \alpha_{m-1}(\psi_{m-1}(\hat{\mathbf{x}})) - \lVert L_p\psi_{m-1}(\mathbf{x}) \rVert \gamma \geq 0
    \end{split} 
\end{equation}
where $K_{\text{perf}}(\hat{\mathbf{x}}_k)$ is some performance-oriented (potentially \emph{safety-agnostic}) control input. While this is not in the typical Quadratic Program (QP) formulation, it can be converted into a Second-Order Conic Program (SOCP) (see \cite[Section B]{dean2021guaranteeing}).

\subsection{For sampled-data systems}
Next, we extend the HO-DRCBF to sampled-data systems. The main motivation comes from the fact that since the control input $\mathbf{u}_k$ applied at time $t_k$ is constant between sampling times, safety needs to be ensured throughout the time-step. Formally, given a sampled state-estimate $\hat{\mathbf{x}}_k \in \mathcal{S}$ at $t_k$, we aim to show that $\mathcal{R}(\hat{\mathbf{x}}_k, T) \in \mathcal{S}$, where $\mathcal{R}(\hat{\mathbf{x}}_k, T)$ represents all the \emph{actual} states reachable from $\hat{\mathbf{x}}_k$ in times $t \in [kT, (k+1)T)$ \cite{gurriet2019realizable}. Before we introduce the result for sampled-data systems, we introduce the following lemma which bounds the \emph{true} system states $\mathbf{x}(t)$ given the sampled state estimate $\hat{\mathbf{x}}_k$ over time $t \in [kT, (k+1)T)$ in presence of disturbances satisfying Assumption \ref{asmp: bounded_disturbance}. 

\begin{lemma}
\label{lemma: state_bounds}
For all $\hat{\mathbf{x}}_k \in \mathcal{D}$ and $t \in [kT, (k+1)T)$, the closed loop trajectories of \eqref{eq: d_cont_sys} satisfy
\begin{equation*}
    \lVert \mathbf{x}(t) - \hat{\mathbf{x}}_k \rVert \leq \epsilon(\mathbf{z}) + T (\Delta + \lVert L_ph(\mathbf{x}) \rVert \gamma)~,
\end{equation*}
where $\Delta = \sup_{\mathbf{x} \in \mathcal{D}, \mathbf{u}_k \in \mathcal{U}} \lVert f(\mathbf{x}) + g(\mathbf{x})\mathbf{u}_k \rVert$.
\end{lemma}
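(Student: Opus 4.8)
The plan is to bound the displacement $\lVert \mathbf{x}(t) - \hat{\mathbf{x}}_k \rVert$ by separating the measurement error incurred at the sampling instant from the drift of the true trajectory over one sampling period. Writing $\mathbf{x}_k := \mathbf{x}(kT)$ for the true state at $t_k$, the triangle inequality gives
\[
\lVert \mathbf{x}(t) - \hat{\mathbf{x}}_k \rVert \;\leq\; \lVert \mathbf{x}(t) - \mathbf{x}_k \rVert \;+\; \lVert \mathbf{x}_k - \hat{\mathbf{x}}_k \rVert .
\]
Because $\hat{\mathbf{x}}_k = \mathbf{x}_k + \mathbf{e}$ for some $\mathbf{e} \in \mathcal{E}(\mathbf{z})$, Assumption~\ref{asmp: bounded_meas_error} bounds the second term by $\epsilon(\mathbf{z})$, so it remains only to control the one-step drift $\lVert \mathbf{x}(t) - \mathbf{x}_k \rVert$.

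For that term, I would use that the closed-loop input is held at the constant value $\mathbf{u}_k$ on $[kT,(k+1)T)$ and write the solution of \eqref{eq: d_cont_sys} in integral form,
\[
\mathbf{x}(t) - \mathbf{x}_k \;=\; \int_{kT}^{t} \big( f(\mathbf{x}(\tau)) + g(\mathbf{x}(\tau))\mathbf{u}_k + p(\mathbf{x}(\tau))\mathbf{d}(\tau) \big)\, d\tau .
\]
Passing norms inside the integral, the control-plus-drift term is bounded in norm by $\Delta$ by the very definition of $\Delta$, while the disturbance term satisfies $\lVert p(\mathbf{x}(\tau))\mathbf{d}(\tau) \rVert \leq \lVert p(\mathbf{x})\rVert\,\gamma$ by Assumption~\ref{asmp: bounded_disturbance} (taking, as elsewhere in the paper, a supremum of the matrix norm $\lVert p(\cdot)\rVert$ over $\mathcal{D}$; this is the disturbance contribution written, up to notation, as $\lVert L_p h(\mathbf{x})\rVert\gamma$ in the statement). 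Since $t - kT < T$, integrating these two constant bounds over $[kT,t]$ gives $\lVert \mathbf{x}(t) - \mathbf{x}_k \rVert \leq T\big(\Delta + \lVert p(\mathbf{x})\rVert\gamma\big)$, and combining with the $\epsilon(\mathbf{z})$ term yields the stated inequality.

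The step that needs the most care — and the main obstacle — is justifying that the uniform bound $\Delta$ (and the accompanying sup bound on $\lVert p(\cdot)\rVert$) may legitimately be used along the trajectory, i.e. that $\mathbf{x}(\tau) \in \mathcal{D}$ for every $\tau \in [kT,t]$. Since $f,g,p$ are only locally Lipschitz, $\Delta$ is finite precisely because $\mathcal{D}$, $\mathcal{U}$, and $\mathcal{P}$ are compact, but the supremum is only applicable while the state has not yet exited $\mathcal{D}$; existence and uniqueness of the closed-loop trajectory on $[kT,(k+1)T)$ itself follows from the Lipschitz hypotheses together with measurability of $\mathbf{d}(\cdot)$. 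This is typically handled either by restricting $t$ to the maximal subinterval of $[kT,(k+1)T)$ on which $\mathbf{x}(\tau) \in \mathcal{D}$, or by assuming $\mathcal{D}$ is taken large enough that the one-step reachable set $\mathcal{R}(\hat{\mathbf{x}}_k,T)$ remains inside it; if a fully self-contained argument is desired, a Gr\"onwall-type estimate on $\lVert \mathbf{x}(\tau) - \mathbf{x}_k \rVert$ can be invoked to close the loop. Everything else is a routine application of the triangle inequality together with the two boundedness assumptions.
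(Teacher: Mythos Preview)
Your proposal is correct and follows essentially the same route as the paper: the paper likewise writes the true trajectory in integral form, subtracts the constant $\hat{\mathbf{x}}_k$, bounds $\lVert \mathbf{x}(t_k)-\hat{\mathbf{x}}_k\rVert$ by $\epsilon(\mathbf{z})$ via Assumption~\ref{asmp: bounded_meas_error}, and bounds the integrand by $\Delta + \lVert L_ph(\mathbf{x})\rVert\gamma$ using Assumption~\ref{asmp: bounded_disturbance}, citing Carath\'eodory for existence and \cite[Thm.~3.4]{khalil2015nonlinear} for the comparison step. Your discussion of the need for $\mathbf{x}(\tau)\in\mathcal{D}$ to justify the use of $\Delta$ is in fact more careful than the paper, which does not address this point explicitly.
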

With this bound, we formally propose the high-order doubly robust CBF for sampled-data systems below. For readability, let $v(\mathbf{z}) := \epsilon(\mathbf{z}) + T(\Delta + \lVert L_ph(\mathbf{x})) \rVert \gamma)$. 
\begin{definition}[HO-SD-DRCBF]
Consider the closed-loop system \eqref{eq: cl_estim_sys}, and a collection of sets $\{\mathcal{S}\}_{i=1}^m$ with functions $\psi_0 := h(\mathbf{x})$ and $\{\psi_i\}_{i=1}^m$ defined as in Definition \ref{def: ho_cbf}. The sufficiently smooth function $h: \mathbb{R}^n \rightarrow \mathbb{R}$ is a \emph{high-order doubly robust} for the sampled data system \eqref{eq: cl_estim_sys}, if there exists suitable class-$\mathcal{K}$ functions $\{\alpha\}_{i=1}^m$ such that there exists a control input $\mathbf{u} \in \mathcal{U}$ such that: 
\begin{multline}
\label{eq: ho_sd_dmr_cbf_condition}
    \sup_{\mathbf{u}_k \in \mathcal{U}} \Big[ L_f\psi_{m-1}(\hat{\mathbf{x}}_k) + L_g\psi_{m-1}(\hat{\mathbf{x}}_k)\mathbf{u}_k + \alpha_{m-1}(\psi_{m-1}(\hat{\mathbf{x}}_k))  \\ 
    - (l_{L_f\psi_{m-1}} + l_{L_g\psi_{m-1}}\lVert \mathbf{u}_k \rVert + l_{\alpha \circ \psi_{m-1}}) v(\mathbf{z}) \\  
    - \lVert L_p\psi_{m-1}(\mathbf{x}) \rVert \gamma \Big] \geq 0~.
\end{multline}
\end{definition}
Similar to the previous section, denote the set of control inputs that satisfy \eqref{eq: ho_sd_dmr_cbf_condition} as $K_{\textrm{ho-sd-drcbf}}$.
\begin{lemma}
\label{lemma: ho_dmr_cbf_sd}
Consider the collection of sets $\{\mathcal{S}\}_{i=1}^m$ with functions $\psi_0 := h(\mathbf{x})$ and $\{\psi_i\}_{i=1}^m$ defined as in Definition \ref{def: ho_cbf}. Given Assumptions \ref{asmp: bounded_disturbance}, \ref{asmp: bounded_meas_error}, and \ref{asmp: lips_conditions} hold, and if $h$ is a HO-SD-MR-CBF for system (\ref{eq: cl_estim_sys}), then any locally Lipschitz continuous controller $\mathbf{u}_k \in K_{\textrm{ho-sd-drcbf}}$ renders $\cap_{i=1}^m \mathcal{S}_i$ forward invariant.
\end{lemma}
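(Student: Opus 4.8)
The plan is to run the argument of Lemma~\ref{lemma: ho_dmr_cbf_ct} on each sampling interval, with the single-instant measurement radius $\epsilon(\mathbf{z})$ replaced by the intra-sample deviation bound $v(\mathbf{z})$ furnished by Lemma~\ref{lemma: state_bounds}. Fix a sampling instant $t_k$ and the piecewise-constant input $\mathbf{u}_k \in K_{\textrm{ho-sd-drcbf}}$ held on $[t_k, t_{k+1})$; by Lemma~\ref{lemma: state_bounds}, every true state $\mathbf{x}(t)$ reached on that interval obeys $\lVert \mathbf{x}(t) - \hat{\mathbf{x}}_k \rVert \leq v(\mathbf{z})$. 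Along the disturbed dynamics \eqref{eq: d_cont_sys} with $\mathbf{u} \equiv \mathbf{u}_k$, the rate of $\psi_{m-1}$ along the trajectory is $\dot{\psi}_{m-1}(\mathbf{x}(t)) = L_f\psi_{m-1}(\mathbf{x}(t)) + L_g\psi_{m-1}(\mathbf{x}(t))\mathbf{u}_k + L_p\psi_{m-1}(\mathbf{x}(t))\mathbf{d}$, so the goal reduces to showing $\dot{\psi}_{m-1}(\mathbf{x}(t)) + \alpha_{m-1}(\psi_{m-1}(\mathbf{x}(t))) \geq 0$ on the interval.

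The core step is this nonnegativity bound. Using Assumption~\ref{asmp: lips_conditions} and the Lipschitz constants $l_{L_f\psi_{m-1}}$, $l_{L_g\psi_{m-1}}$, $l_{\alpha \circ \psi_{m-1}}$, I would add and subtract the values at $\hat{\mathbf{x}}_k$ and estimate $\lvert L_f\psi_{m-1}(\mathbf{x}(t)) - L_f\psi_{m-1}(\hat{\mathbf{x}}_k) \rvert \leq l_{L_f\psi_{m-1}} v(\mathbf{z})$, $\lvert (L_g\psi_{m-1}(\mathbf{x}(t)) - L_g\psi_{m-1}(\hat{\mathbf{x}}_k))\mathbf{u}_k \rvert \leq l_{L_g\psi_{m-1}} \lVert \mathbf{u}_k \rVert v(\mathbf{z})$, and $\lvert \alpha_{m-1}(\psi_{m-1}(\mathbf{x}(t))) - \alpha_{m-1}(\psi_{m-1}(\hat{\mathbf{x}}_k)) \rvert \leq l_{\alpha \circ \psi_{m-1}} v(\mathbf{z})$, while bounding the disturbance contribution by $\lvert L_p\psi_{m-1}(\mathbf{x}(t))\mathbf{d} \rvert \leq \lVert L_p\psi_{m-1}(\mathbf{x}(t)) \rVert \gamma$ via Assumption~\ref{asmp: bounded_disturbance}. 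Collecting these, $\dot{\psi}_{m-1}(\mathbf{x}(t)) + \alpha_{m-1}(\psi_{m-1}(\mathbf{x}(t)))$ is bounded below by exactly the bracketed left-hand side of \eqref{eq: ho_sd_dmr_cbf_condition} evaluated at $(\hat{\mathbf{x}}_k, \mathbf{u}_k)$, which is $\geq 0$ since $\mathbf{u}_k \in K_{\textrm{ho-sd-drcbf}}$. As this holds on every interval and $\mathbf{x}(\cdot)$ is continuous across samples, $\dot{\psi}_{m-1}(\mathbf{x}(t)) \geq -\alpha_{m-1}(\psi_{m-1}(\mathbf{x}(t)))$ for a.e.\ $t \geq t_0$.

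To conclude, I would apply the comparison lemma: since $\alpha_{m-1} \in \mathcal{K}$ and the initial condition $\mathbf{x}(t_0) \in \cap_{i=1}^m \mathcal{S}_i$ gives $\psi_{m-1}(\mathbf{x}(t_0)) \geq 0$, the differential inequality forces $\psi_{m-1}(\mathbf{x}(t)) \geq 0$ for all $t \geq t_0$, i.e.\ $\mathbf{x}(t) \in \mathcal{S}_m$. Then, exactly as in the proof of Theorem~\ref{th: ho_cbf} (\cite{xiao2021high}), a downward induction on $i$ using $\psi_i = \dot{\psi}_{i-1} + \alpha_i(\psi_{i-1})$ and the smoothness of $h$ propagates nonnegativity: $\psi_i(\mathbf{x}(t)) \geq 0$ gives $\dot{\psi}_{i-1}(\mathbf{x}(t)) \geq -\alpha_i(\psi_{i-1}(\mathbf{x}(t)))$, hence $\psi_{i-1}(\mathbf{x}(t)) \geq 0$, and iterating down to $i = 1$ yields $h(\mathbf{x}(t)) = \psi_0(\mathbf{x}(t)) \geq 0$ for all $t \geq t_0$; thus $\cap_{i=1}^m \mathcal{S}_i$ is forward invariant.

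The main obstacle is the middle paragraph: verifying that the three state-mismatch terms and the disturbance term are \emph{simultaneously} absorbed by precisely the negative margin in \eqref{eq: ho_sd_dmr_cbf_condition}, and in particular that the correct radius is $v(\mathbf{z})$ (measurement error plus open-loop drift over one period) rather than $\epsilon(\mathbf{z})$ alone --- this is exactly what Lemma~\ref{lemma: state_bounds} delivers. A secondary point requiring care is that $\mathbf{x}(t)$ must remain in $\mathcal{D}$ for the Lipschitz estimates to apply (handled by restricting attention to a sublevel set of $h$ or assuming $\mathcal{D}$ sufficiently large), and that the comparison-lemma step must be invoked directly on the absolutely continuous trajectory generated by the piecewise-constant controller, since Theorem~\ref{th: ho_cbf} as stated presumes a locally Lipschitz control law.
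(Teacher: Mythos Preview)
Your proposal is correct and follows essentially the same route as the paper: the paper's proof of Lemma~\ref{lemma: ho_dmr_cbf_sd} is a one-line reduction to the relative-degree-1 sampled-data argument (Lemma~\ref{lemma: sd_dmr_cbf}) with $h$ replaced by $\psi_{m-1}$, and that argument is exactly your middle paragraph --- add and subtract the values at $\hat{\mathbf{x}}_k$, bound the three mismatch terms by Lipschitz constants times $\lVert \mathbf{x}(t)-\hat{\mathbf{x}}_k\rVert \leq v(\mathbf{z})$ from Lemma~\ref{lemma: state_bounds}, bound the disturbance term by $\lVert L_p\psi_{m-1}\rVert\gamma$, and recognize the result as the bracket in \eqref{eq: ho_sd_dmr_cbf_condition}. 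Your explicit comparison-lemma step and downward induction simply spell out the invocation of Theorem~\ref{th: ho_cbf} that the paper leaves implicit, and the caveats you flag (trajectory remaining in $\mathcal{D}$, piecewise-constant versus Lipschitz controller) are genuine technical points that the paper does not address.
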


Similar to Lemma \ref{lemma: ho_dmr_cbf_ct}, we see that robust margins are added to the original CBF constraint. The difference here is the additional term $T(\Delta + \lVert L_ph(\mathbf{x})) \rVert \gamma)$ to bound the evolution of the true system state $\mathbf{x}$ over the entire sampling period. Additionally, similar to \eqref{eq: ho_dmr_cbf_optim}, we can formulate an optimization problem with the constraint being \eqref{eq: ho_sd_dmr_cbf_condition} to find the safe control input for the sampled-data system. 

\begin{remark}[Reducing conservative behavior]
To further improve feasibility and reduce conservative behavior, we can estimate the worst-case change in safety over the sampling period, given an approximation of $\mathcal{R}(\hat{\mathbf{x}}, T)$ (similar to \cite{breeden2021control}). For example, following the parametric approach proposed in \cite{xiao2021high}, under matched disturbance, consider the following margin function:
\begin{multline}
    \mathtt{margin}^{\mathtt{mat}} := (L_f\psi_{m-1}(y) - L_f\psi_{m-1}(\hat{\mathbf{x}}_k)) + \\ 
    (L_g\psi_{m-1}(y) - L_g\psi_{m-1}(\hat{\mathbf{x}}_k))\mathbf{u}_k + L_p\psi_{m-1}(y)\mathbf{d} \\ 
    + (p_m\psi_{m-1}^{q_m}(y) - p_m\psi_{m-1}^{q_m}(\hat{\mathbf{x}}_k))~.
\end{multline}
We now define the set $K_{\mathtt{reach}}^{\mathtt{mat}}$ as:
\begin{multline*}
    K_{\mathtt{reach}}^{\mathtt{mat}} := \{ \mathbf{u}_k: \sup_{\mathbf{u}_k \in \mathcal{U}} \Big[ L_f\psi_{m-1}(\hat{\mathbf{x}}_k) + L_g\psi_{m-1}(\hat{\mathbf{x}}_k)\mathbf{u}_k \\ 
    + \alpha(\psi_{m-1}(\hat{\mathbf{x}}_k)) \Big] - \sup_{y \in \mathcal{R}(\hat{\mathbf{x}}, T), \mathbf{u}_k \in \mathcal{U}, \mathbf{d} \in \mathcal{P}}\mathtt{margin}^{\mathtt{mat}} \geq 0\}~.
\end{multline*}
Similar to the proof of Lemma \ref{lemma: ho_dmr_cbf_sd}, it can be shown that, under the stated assumptions, $\mathbf{u}_k \in K_{\mathtt{reach}}^{\mathtt{mat}}$ indeed renders $\cap_{i=1}^m \mathcal{S}_i$ forward invariant. We use the concept of \emph{interval reachability} to obtain an over-approximation of $\mathcal{R}(\hat{\mathbf{x}}, T)$ \cite{meyer2019tira}. 

To further achieve better (less conservative) behavior, assuming that the disturbance bounds are accurate, the only design choice available for the controllers is the choice of the class $\mathcal{K}$ function. A potential approach to obtain the optimal design parameters for our proposed robust HOCBF for sampled data systems is to design an iterative algorithm similar to the paramterization method proposed in \cite{xiao2021high}. Further reduction in conservatism could also be obtained by incorporating an MPC type approach, i.e., remedying the myopic nature of our proposed controller by giving it some foresight. This would in-turn demand building \emph{robust discrete-time CBFs}, similar to \cite{cosner2023robust}. When the uncertainty bounds used are conservative approximations of the actual quantities, an adaptive or learning based approach could be utilized to estimate and update the CBF \cite{cohen2022high, lopez2023unmatched}.
\end{remark}

\subsection{Mis-matched disturbance}
\label{sec: mismatached_disturbance}

Lastly, we look at a special mis-matched disturbance case when $\textrm{IRD} - \textrm{DRD} = 1$. Here, we assume that the disturbance \emph{does not} affect the control inputs. This is a practical assumption as the \emph{actual} system inputs $\mathbf{u}$ can be augmented as states of the system and can be thought of as integrals of some virtual inputs $\mathbf{v}$, i.e., $\dot{\mathbf{u}} = \mathbf{v}$. Following the parametric approach, we obtain the following formulation for $\psi_m$:
\begin{equation*}
    \begin{split}
        & \psi_{m} := L_f^mh(\mathbf{x}) + L_gL_f^{m-1}h(\mathbf{x})\mathbf{u} + L_pL_f^{m-1}h(\mathbf{x})\mathbf{d} \\ 
        & +  L_f[L_pL_f^{m-2}h(\mathbf{x})\mathbf{d}] + L_p[L_pL_f^{m-2}h(\mathbf{x})\mathbf{d}]\mathbf{d} \\ 
        & + O(h(\mathbf{x})) + p_{m} \psi_{m-1}^{q_m}~,
    \end{split}
\end{equation*}
where $(p_i, q_i), ~i \in \{1,2,\ldots, m\}$ with $q_i \geq 1$ are parameters of the HOCBF, $O(h(\mathbf{x}))$ are functions involving $L_fh(\mathbf{x})$ and its higher derivatives. 
Note that since we assume that the disturbance affects only the states and not the control inputs (maybe virtual) of the system, we have $L_g[L_pL_f^{m-2}h(\mathbf{x})\mathbf{d}]\mathbf{u} = 0$. Now, similar to previous sections, to prove the forward invariance of $\cap_{i=1}^m \mathcal{S}_i$, we look to show $\psi_m \geq 0$ with the robust variation obtained by taking the norms of all terms involving $\mathbf{d}$. Observe that in this case the constraint is still affine in $\mathbf{u}$, conserving the resulting QP formulation of the resulting the safety controller. 

\section{Numerical Example}
\label{sec: numerical}

In this section, we showcase our proposed approach on the safe control of a wheeled robot with states $(x, y, \theta, v)$, corresponding to the robot's x-position, y-position, heading, and velocity, respectively. The robot is required to avoid an obstacle in presence of disturbance noise and measurement errors. The safety requirement is given by $h(\mathbf{x}) = (x-x_o)^2 + (y-y_o)^2 - D^2 \geq 0$, where $(x_o, y_o)$ are the coordinates of the obstacle and $D = 5\textrm{m}$ is the safe distance to be maintained from $(x_o, y_o)$. We model the robot using the unicycle model given by:
\begin{equation}
\label{eq: robot}
    \begin{bmatrix}
    \dot{x} \\
    \dot{y} \\
    \dot{\theta} \\
    \dot{v} 
    \end{bmatrix} = \begin{bmatrix}
    v \cos{\theta} \\
    v \sin{\theta} \\
    0 \\
    0 
    \end{bmatrix} + \begin{bmatrix}
    0 & 0\\
    0 & 0\\
    1 & 0\\
    0 & 1
    \end{bmatrix}\begin{bmatrix}
    u_1 \\
    u_2
    \end{bmatrix} + \begin{bmatrix}
    1 & 0 \\
    0 & 1 \\
    0 & 0 \\
    0 & 0 
    \end{bmatrix}
    \begin{bmatrix}
    d_1 \\
    d_2 
    \end{bmatrix}~.
\end{equation}
We assume a disturbance $\mathbf{d} = [d_1, d_2]$, with $d_1, d_2 \in [-0.3, 0.3]$, in the evolution of the system position variables, a measurement error of $[-0.5, 0.5]$ meters in the position estimates, and perfect measurement of heading and velocity. The control limits are set to $[-1, -2]^T \leq \mathbf{u} \leq [1, 2]^T$. 

The state of the robot is sampled every $T=0.1\text{s}$. We obtain the next state by numerically integrating the dynamics using the Runge-Kutta 4\textsuperscript{th} order approach with constant control input (ZOH approach). Observe that the $\textrm{DRD}$ of the system with respect to $h(\mathbf{x})$ is 1, whereas the $\textrm{IRD}$ is 2. We use the TIRA reachability toolbox \cite{meyer2019tira} to obtain $\mathcal{R}(\hat{\mathbf{x}}, T)$\footnote{We note that the reachable sets obtained through the toolbox are the reachable sets \emph{at} time instant $t = (k+1)T$, whereas for the current approach we require a representation of all the states reachable \emph{through} time $t = kT$ to $t=(k+1)T$. A discussion on this is provided in the appendix, where we numerically observe that for the current system, the margins estimated from the reachable set at time $(k+1)T$ from time $kT$ are larger than the margins at any time $t \in [kT, (k+1)T)$.}. The safe control input is obtained by solving the following optimization problem:
\begin{equation}
\label{eq: robot_optim}
    \begin{split}
    & K_{\text{safe}}(\hat{\mathbf{x}}) = \argminF_{\mathbf{u} \in \mathcal{U}} \frac{1}{2} \lVert \mathbf{u} - K_{\text{perf}}(\hat{\mathbf{x}}) \rVert^2 \\
    & \text{  s.t.  } f_v(\hat{\mathbf{x}}, \mathbf{u}) - \sup_{\mathbf{x} \in \mathcal{R}(\hat{\mathbf{x}}, T), \mathbf{u}_k \in \mathcal{U}, \mathbf{d} \in \mathcal{P}} \mathtt{margin} \geq 0~,
    \end{split} 
\end{equation}
where $\mathtt{margin} = f_v(\mathbf{x}, \mathbf{u}) - f_v(\hat{\mathbf{x}}, \mathbf{u}) + f_d(\mathbf{x}, \mathbf{d})$,  and 
\begin{equation*}
    \begin{split}
        f_v(\mathbf{x}, \mathbf{u}) & = L^2_fh(\mathbf{x}) + 2L_fh(\mathbf{x}) + 2h(\mathbf{x}) + L_gL_fh(\mathbf{x})\mathbf{u} \\
        f_d(\mathbf{x}, \mathbf{d}) & = L_pL_fh(\mathbf{x})\mathbf{d} + L_f[L_ph(\mathbf{x})\mathbf{d}] + L_p[L_ph(\mathbf{x})\mathbf{d}]\mathbf{d} + \\ & 2L_ph(\mathbf{x})\mathbf{d}~.
    \end{split}
\end{equation*}

\begin{figure}[t]
    \centering
    \includegraphics[width=0.75\columnwidth]{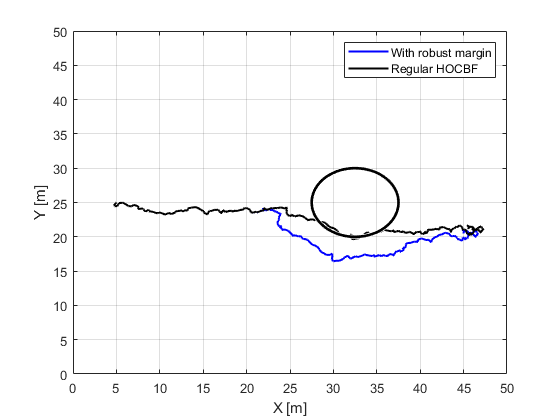}
    \caption{Robot trajectory obtained by using the vanilla HOCBF approach (black curve) and the proposed robust variation (blue curve).}
    \label{fig: robot_trajs}
\end{figure}

\begin{figure}[t]
    \centering
    \includegraphics[width=0.75\columnwidth]{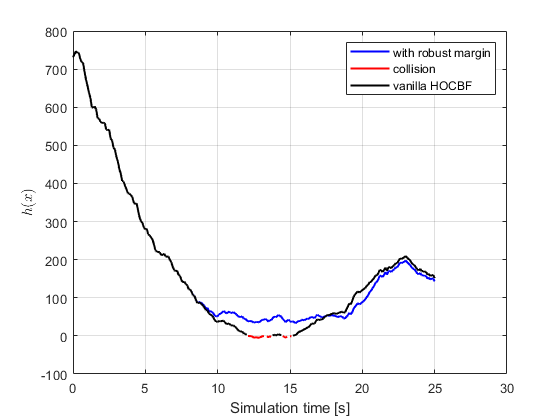}
    \caption{Value of $h(\mathbf{x})$ obtained by using the vanilla HOCBF approach (black curve) and the proposed robust variation (blue curve). The red part of the vanilla HOCBF trajectory indicates a collision with the obstacle.}
    \label{fig: h_x}
\end{figure}

\begin{figure}[t]
    \centering
    \includegraphics[width=0.75\columnwidth]{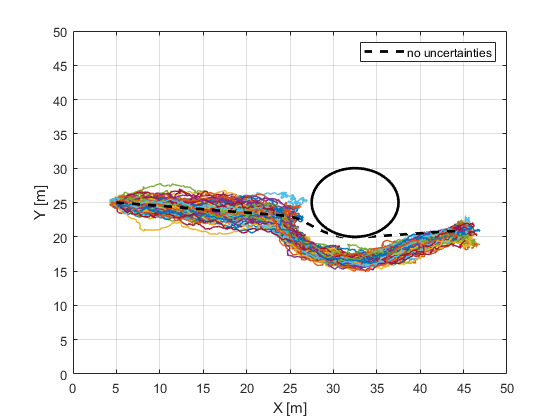}
     \caption{Trajectories obtained by running 100 randomized simulations. The dashed line indicated the trajectory obtained using a vanilla HOCBF when there are no uncertainties in the system.} 
    \label{fig: 100_runs_traj}
\end{figure}

\begin{figure}[t]
    \centering
    \includegraphics[width=0.75\columnwidth]{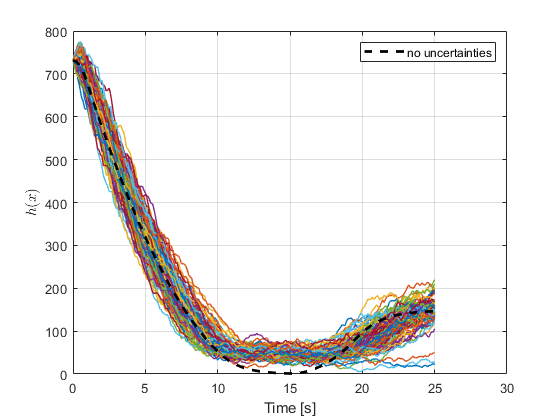}
    \caption{The value of $h(\mathbf{x})$ for 100 randomized runs. The dashed line indicated the trajectory obtained using a vanilla HOCBF when there are no uncertainties in the system. A negative value indicates a collision.}
    \label{fig: 100_runs_hx}
\end{figure}

\begin{table}[t]
\centering
\caption{A comparison of the least value of $h(\mathbf{x})$ obtained in the 100 runs to that obtained when there were no uncertainties in the system (considered  the benchmark).}
\begin{tabular}{|l|l|l|l|l|}
\hline
                & minimum & maximum & average & \begin{tabular}[c]{@{}l@{}}no \\ uncertainties\end{tabular} \\ \hline
$\min{h(\mathbf{x})}$ & 12.69   & 49.04   & 33.77   & 1.17                                                        \\ \hline
\end{tabular}
\label{table: multiple_run_comp}
\end{table}

\begin{figure*}
    \centering
    \includegraphics[width=\textwidth]{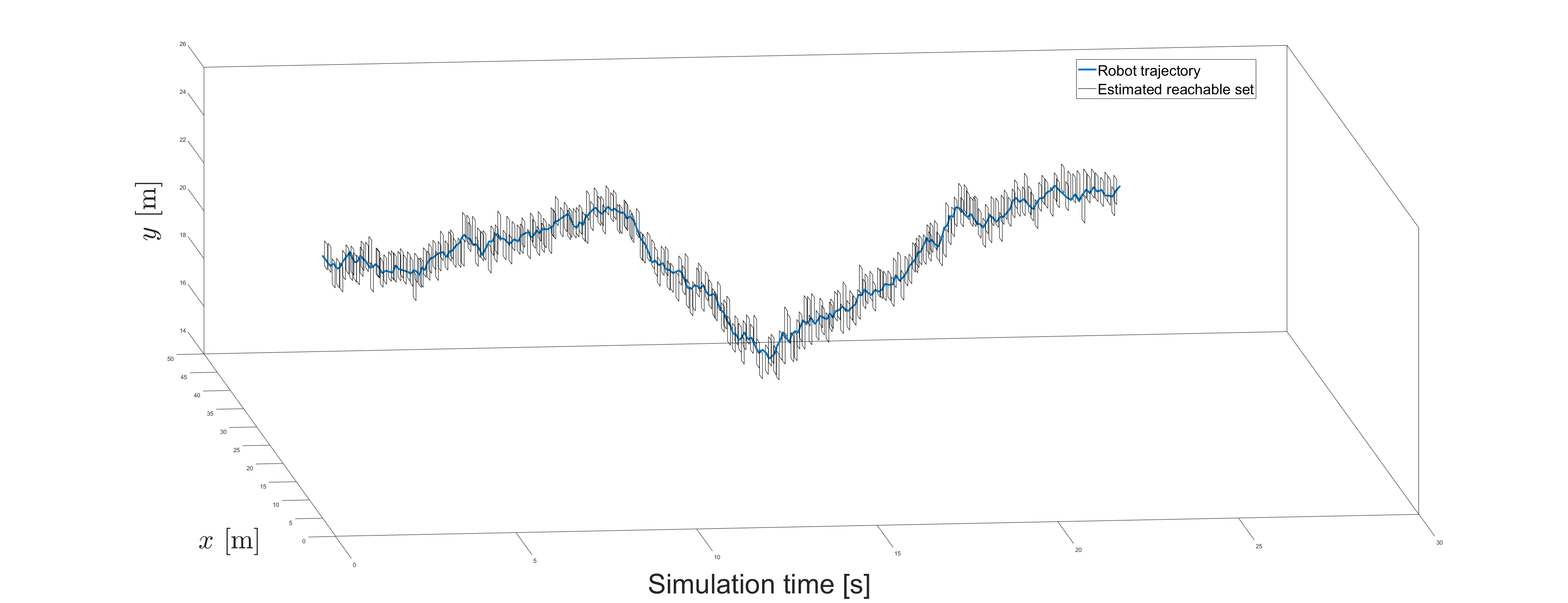}
    \caption{Reachable sets (black boxes) estimated at each time-step. The blue curve indicates the resulting trajectory taken by the robot.}
    \label{fig: reach_traj}
\end{figure*}

The robot starts from an initial position of $(5 ,25)$ and the goal is to reach $(45,21)$ while avoiding an obstacle at $(32.5, 25)$ by $D$. The safety-agnostic performance controller $K_{\text{perf}}(\hat{\mathbf{x}})$ is an MPC based controller. The resulting robot trajectories using the vanilla HOCBF approach and the proposed robust variation in \eqref{eq: robot_optim} are shown in Fig. \ref{fig: robot_trajs}. The safety requirement $h(\mathbf{x})$ is shown in Fig. \ref{fig: h_x}. The reachable sets estimated at each time-step are shown in Fig. \ref{fig: reach_traj}. It can be seen that the true trajectory taken by the robot passes through these sets. It can be clearly seen that the robot using the vanilla HOCBF (black curve) approach collides with the obstacle while the one using the proposed approach (blue curve) avoids collision.

Considering the stochasticity in the system, we perform 100 runs of the numerical problem described in the previous section, each time randomly sampling the disturbance error and measurement noise from their respective distributions. The resulting trajectories and the value of the safety requirement are shown in Figs. \ref{fig: 100_runs_traj} and \ref{fig: 100_runs_hx}, respectively, and are compared against the results obtained by implementing the vanilla HOCBF when there are no uncertainties in the system (the dashed line in the same figures). A comparison of the value of $h(\mathbf{x})$ is given in Table \ref{table: multiple_run_comp}. It can be seen that 32.6$\textrm{m}$ of performance (i.e., available free space to the obstacle) is lost on average due to the uncertainties in the system. 

\textit{Note: }From Figs. \ref{fig: 100_runs_traj} and \ref{fig: 100_runs_hx} it can be seen that are three trajectories that behave slightly differently from the bulk of trajectories due to the specific sequence of disturbance. Specifically, the robot goes towards  the obstacle and then turns down. While it may seem that the robot is unable to reach its goal, on further increasing the simulation time, the robot does indeed reach its goal while avoiding the obstacle. The behaviour of these trajectories is shown separately in Fig. \ref{fig: weird_traj}.
\section{Conclusion}
\label{sec: conclusion}

In this paper, we proposed doubly robust control barrier functions (DRCBF) for sampled-data systems with piece-wise constant controllers under bounded disturbance and measurement errors. Our DRCBF is applicable to scenarios where the relative degree of the control barrier function is 1 and higher. Additionally, for the high-order case we consider both matched disturbance and a special case of mis-matched disturbance where the relative degree of the control barrier funcion with respect to the disturbance is one lower than the input relative degree. We further incorporated interval reachability techniques to reduce the conservative behavior of the proposed sfae controller. A future direction of research is considering online learning of the disturbance bounds, using disturbance observers or Gaussian processes.

\bibliographystyle{IEEEtran}
\bibliography{IEEEabrv,ref.bib}

\begin{thebibliography}{10}
\providecommand{\url}[1]{#1}
\csname url@rmstyle\endcsname
\providecommand{\newblock}{\relax}
\providecommand{\bibinfo}[2]{#2}
\providecommand\BIBentrySTDinterwordspacing{\spaceskip=0pt\relax}
\providecommand\BIBentryALTinterwordstretchfactor{4}
\providecommand\BIBentryALTinterwordspacing{\spaceskip=\fontdimen2\font plus
\BIBentryALTinterwordstretchfactor\fontdimen3\font minus
  \fontdimen4\font\relax}
\providecommand\BIBforeignlanguage[2]{{%
\expandafter\ifx\csname l@#1\endcsname\relax
\typeout{** WARNING: IEEEtran.bst: No hyphenation pattern has been}%
\typeout{** loaded for the language `#1'. Using the pattern for}%
\typeout{** the default language instead.}%
\else
\language=\csname l@#1\endcsname
\fi
#2}}

\bibitem{blanchini1999set}
F.~Blanchini, ``Set invariance in control,'' \emph{Automatica}, vol.~35,
  no.~11, pp. 1747--1767, 1999.

\bibitem{ames2019control}
A.~D. Ames, S.~Coogan, M.~Egerstedt, G.~Notomista, K.~Sreenath, and P.~Tabuada,
  ``Control barrier functions: Theory and applications,'' in \emph{2019 18th
  European control conference (ECC)}.\hskip 1em plus 0.5em minus 0.4em\relax
  IEEE, 2019, pp. 3420--3431.

\bibitem{cortez2019control}
W.~S. Cortez, D.~Oetomo, C.~Manzie, and P.~Choong, ``Control barrier functions
  for mechanical systems: Theory and application to robotic grasping,''
  \emph{IEEE Transactions on Control Systems Technology}, vol.~29, no.~2, pp.
  530--545, 2019.

\bibitem{ames2016control}
A.~D. Ames, X.~Xu, J.~W. Grizzle, and P.~Tabuada, ``Control barrier function
  based quadratic programs for safety critical systems,'' \emph{IEEE
  Transactions on Automatic Control}, vol.~62, no.~8, pp. 3861--3876, 2016.

\bibitem{dean2021guaranteeing}
S.~Dean, A.~Taylor, R.~Cosner, B.~Recht, and A.~Ames, ``Guaranteeing safety of
  learned perception modules via measurement-robust control barrier
  functions,'' in \emph{Conference on Robot Learning}.\hskip 1em plus 0.5em
  minus 0.4em\relax PMLR, 2021, pp. 654--670.

\bibitem{xiao2021high}
W.~Xiao and C.~Belta, ``High-order control barrier functions,'' \emph{IEEE
  Transactions on Automatic Control}, vol.~67, no.~7, pp. 3655--3662, 2021.

\bibitem{jankovic2018robust}
M.~Jankovic, ``Robust control barrier functions for constrained stabilization
  of nonlinear systems,'' \emph{Automatica}, vol.~96, pp. 359--367, 2018.

\bibitem{clark2019control}
A.~Clark, ``Control barrier functions for complete and incomplete information
  stochastic systems,'' in \emph{2019 American Control Conference (ACC)}.\hskip
  1em plus 0.5em minus 0.4em\relax IEEE, 2019, pp. 2928--2935.

\bibitem{oruganti2023safe}
P.~S. Oruganti, P.~Naghizadeh, and Q.~Ahmed, ``Safe control using high-order
  measurement robust control barrier functions,'' in \emph{2023 American
  Control Conference (ACC)}, 2023, pp. 4148--4154.

\bibitem{breeden2021control}
J.~Breeden, K.~Garg, and D.~Panagou, ``Control barrier functions in
  sampled-data systems,'' \emph{IEEE Control Systems Letters}, vol.~6, pp.
  367--372, 2021.

\bibitem{agrawal2017discrete}
A.~Agrawal and K.~Sreenath, ``Discrete control barrier functions for
  safety-critical control of discrete systems with application to bipedal robot
  navigation.'' in \emph{Robotics: Science and Systems}, vol.~13.\hskip 1em
  plus 0.5em minus 0.4em\relax Cambridge, MA, USA, 2017.

\bibitem{xiong2022discrete}
Y.~Xiong, D.-H. Zhai, M.~Tavakoli, and Y.~Xia, ``Discrete-time control barrier
  function: High-order case and adaptive case,'' \emph{IEEE Transactions on
  Cybernetics}, 2022.

\bibitem{usevitch2021adversarial}
J.~Usevitch and D.~Panagou, ``Adversarial resilience for sampled-data systems
  using control barrier function methods,'' in \emph{2021 American Control
  Conference (ACC)}.\hskip 1em plus 0.5em minus 0.4em\relax IEEE, 2021, pp.
  758--763.

\bibitem{singletary2020control}
A.~Singletary, Y.~Chen, and A.~D. Ames, ``Control barrier functions for
  sampled-data systems with input delays,'' in \emph{2020 59th IEEE Conference
  on Decision and Control (CDC)}.\hskip 1em plus 0.5em minus 0.4em\relax IEEE,
  2020, pp. 804--809.

\bibitem{taylor2022safety}
A.~J. Taylor, V.~D. Dorobantu, R.~K. Cosner, Y.~Yue, and A.~D. Ames, ``Safety
  of sampled-data systems with control barrier functions via approximate
  discrete time models,'' in \emph{2022 IEEE 61st Conference on Decision and
  Control (CDC)}.\hskip 1em plus 0.5em minus 0.4em\relax IEEE, 2022, pp.
  7127--7134.

\bibitem{khalil2015nonlinear}
H.~K. Khalil, \emph{Nonlinear control}.\hskip 1em plus 0.5em minus 0.4em\relax
  Pearson New York, 2015, vol. 406.

\bibitem{cohen2022high}
M.~H. Cohen and C.~Belta, ``High order robust adaptive control barrier
  functions and exponentially stabilizing adaptive control lyapunov
  functions,'' in \emph{2022 American Control Conference (ACC)}.\hskip 1em plus
  0.5em minus 0.4em\relax IEEE, 2022, pp. 2233--2238.

\bibitem{tan2021high}
X.~Tan, W.~S. Cortez, and D.~V. Dimarogonas, ``High-order barrier functions:
  Robustness, safety, and performance-critical control,'' \emph{IEEE
  Transactions on Automatic Control}, vol.~67, no.~6, pp. 3021--3028, 2021.

\bibitem{gurriet2019realizable}
T.~Gurriet, P.~Nilsson, A.~Singletary, and A.~D. Ames, ``Realizable set
  invariance conditions for cyber-physical systems,'' in \emph{2019 American
  Control Conference (ACC)}.\hskip 1em plus 0.5em minus 0.4em\relax IEEE, 2019,
  pp. 3642--3649.

\bibitem{meyer2019tira}
P.-J. Meyer, A.~Devonport, and M.~Arcak, ``Tira: Toolbox for interval
  reachability analysis,'' in \emph{Proceedings of the 22nd ACM International
  Conference on Hybrid Systems: Computation and Control}, 2019, pp. 224--229.

\bibitem{cosner2023robust}
R.~K. Cosner, P.~Culbertson, A.~J. Taylor, and A.~D. Ames, ``Robust safety
  under stochastic uncertainty with discrete-time control barrier functions,''
  \emph{arXiv preprint arXiv:2302.07469}, 2023.

\bibitem{lopez2023unmatched}
B.~T. Lopez and J.-J.~E. Slotine, ``Unmatched control barrier functions:
  Certainty equivalence adaptive safety,'' in \emph{2023 American Control
  Conference (ACC)}.\hskip 1em plus 0.5em minus 0.4em\relax IEEE, 2023, pp.
  3662--3668.

\bibitem{lars2011nonlinear}
G.~Lars and P.~J{\"u}rgen, ``Nonlinear model predictive control theory and
  algorithms,'' 2011.

\end{thebibliography}

\clearpage
\appendix
\subsection{Proof of Lemma \ref{lemma: state_bounds}}
\begin{proof}
Over $t \in [kT, (k+1)T)$, under a constant input $\mathbf{u}_k$, we note that the evolution of the true state is $\dot{\mathbf{x}} = f(\mathbf{x}) + g(\mathbf{x})\mathbf{u}_k + p(\mathbf{x})\mathbf{d}$ while $\hat{\mathbf{x}}_k$ is constant (i.e., $\dot{\hat{\mathbf{x}}}_k = 0$). Given $f$, $g$, and $p$ are locally Lipschitz continuous, Caratheodery's theorem \cite{lars2011nonlinear} ensures that the solution exists and is unique for a bounded input $\mathbf{u}_k$. The rest of the proof follows \cite[Thm.~3.4]{khalil2015nonlinear}. 

The solutions of the true system \eqref{eq: d_cont_sys} and sampled state for all $t \in [t_k, t_k+ T]$ are:
\begin{equation*}
    \begin{split}
        & \mathbf{x}(t) = \mathbf{x}(t_k) + \int_{t_k}^{t_k + T}  [f(\mathbf{x}(s)) + g(\mathbf{x}(s))\mathbf{u}_k + p(\mathbf{x}(s))\mathbf{d}] ds\\
        & \hat{\mathbf{x}}(t) = \hat{\mathbf{x}}_k 
    \end{split}
\end{equation*}
Subtracting both equations and taking norms yields
\begin{equation*}
    \begin{split}
        &\lVert \mathbf{x}(t) - \hat{\mathbf{x}}(t) \rVert \leq \lVert \mathbf{x}(t_k) - \hat{\mathbf{x}}_k \rVert + \\
        &\quad \int_{t_k}^{t_k + T}  \lVert f(\mathbf{x}(s)) + g(\mathbf{x}(s))\mathbf{u}_k + p(\mathbf{x}(s))\mathbf{d} \rVert ~ds.
    \end{split}
\end{equation*}

From Assumption \ref{asmp: bounded_disturbance}, we know $\lVert \mathbf{d} \rVert \leq \gamma$ and from Assumption \ref{asmp: bounded_meas_error}, $\lVert \mathbf{x}(t_k) - \hat{\mathbf{x}}_k \rVert < \epsilon(\mathbf{z})$. Additionally, since $\mathcal{U}$ is compact, there exists $u_{\textrm{max}}$ such that $\lVert \mathbf{u}_k \rVert \leq u_{\textrm{max}}$. This gives us:
\begin{equation*}
        \lVert \mathbf{x}(t) - \hat{\mathbf{x}}(t) \rVert \leq \epsilon(\mathbf{z}) + T(\Delta + \lVert L_ph(\mathbf{x})) \rVert \gamma)
\end{equation*}
and concludes the proof.
\end{proof}

\subsection{When $h$ is of relative degree 1}

\subsubsection{For continuous time systems} We first address the bounded disturbance $\mathbf{d}$ in the continuous system \eqref{eq: cl_estim_sys}. We introduce the following definition of a \emph{doubly robust} CBF (DMR-CBF), i.e., robust to both external disturbance and measurement errors. 

\begin{definition}[DMR-CBF]
\label{def: dmr_cbf}
Consider the closed-loop system \eqref{eq: cl_estim_sys} and a safe set $\mathcal{S}$ defined as \eqref{eq: safe_set}. The continuously differentiable function $h: \mathbb{R}^n \rightarrow \mathbb{R}$ is a \emph{doubly robust CBF (DMR-CBF)} if there exist a suitable class-$\mathcal{K}$ function $\alpha$ and parametric functions $a, b: \mathbb{R}^l \rightarrow \mathbb{R}_{\geq 0}$ such that there exists a control input $\mathbf{u} \in \mathcal{U}$ under which:
\begin{multline}
\label{eq: dmr_cbf_condition}
    \sup_{\mathbf{u} \in \mathcal{U}} \Big[ L_fh(\hat{\mathbf{x}}) + L_gh(\hat{\mathbf{x}})\mathbf{u} + \alpha(h(\hat{\mathbf{x}})) \\ - (a(\mathbf{z}) + b(\mathbf{z})\lVert \mathbf{u} \rVert)\epsilon(\mathbf{x}) - \lVert L_ph(\mathbf{x}) \rVert \gamma \Big] \geq 0~. 
\end{multline}
\end{definition}

\begin{lemma}
\label{lemma: distrb_cont_reg_cbf}
    Consider system \eqref{eq: cl_estim_sys} and a safe set $\mathcal{S} \subset \mathbb{R}^n$ defined by a continuously differentiable function $h$ as \eqref{eq: safe_set}, and let $h$ be a DMR-CBF as defined in Definition \ref{def: dmr_cbf}. Assume that the functions $L_fh$, $L_gh$, and $\alpha \circ h$ are Lipschitz continuous with Lipschitz constants $\bar{L}_{L_fh}$, $\bar{L}_{L_gh}$, and $\bar{L}_{\alpha \circ h} \in \mathbb{R}_{\geq 0}$. Define parametric functions $a(\mathbf{z})=\epsilon(\mathbf{z})(\bar{L}_{L_fh} + \bar{L}_{L_gh})$ and $b(\mathbf{z})=\epsilon(\mathbf{z})(\bar{L}_{\alpha \circ h})$. Provided Assumptions \ref{asmp: bounded_disturbance} and \ref{asmp: bounded_meas_error} hold, any locally Lipschitz continuous controller $\mathbf{u} \in K_{\textrm{dmr}}$ renders \eqref{eq: cl_estim_sys} forward invariant with respect to $\mathcal{S}$.
\end{lemma}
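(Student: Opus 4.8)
The plan is to reduce the statement to the standard Nagumo/comparison-lemma argument used for ordinary CBFs. Concretely, I would show that along every closed-loop solution of \eqref{eq: cl_estim_sys}, and for \emph{every} admissible disturbance $\mathbf{d}\in\mathcal{P}$ and measurement error $\mathbf{e}\in\mathcal{E}(\mathbf{z})$, the barrier obeys $\dot{h}(\mathbf{x}) \ge -\alpha(h(\mathbf{x}))$ almost everywhere. Granting this, the comparison lemma (the same step that proves the nominal result \cite{ames2019control}) yields $h(\mathbf{x}(t))\ge 0$ whenever $h(\mathbf{x}(t_0))\ge 0$, which is exactly forward invariance of $\mathcal{S}$. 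Well-posedness of the closed loop (existence and uniqueness of the solution, hence validity of the a.e.\ differential inequality) follows from local Lipschitzness of $f,g,p$ and of the controller $k$ together with the boundedness assumptions, exactly as in the proof of Lemma \ref{lemma: state_bounds}.

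The core step is an add-and-subtract at the state estimate. Differentiating $h$ along \eqref{eq: cl_estim_sys} with $\mathbf{u}=k(\mathbf{z},\hat{\mathbf{x}})$ gives
\begin{align*}
\dot{h}(\mathbf{x}) + \alpha(h(\mathbf{x})) &= L_fh(\hat{\mathbf{x}}) + L_gh(\hat{\mathbf{x}})\mathbf{u} + \alpha(h(\hat{\mathbf{x}})) \\
&\quad + \big(L_fh(\mathbf{x}) - L_fh(\hat{\mathbf{x}})\big) + \big(L_gh(\mathbf{x}) - L_gh(\hat{\mathbf{x}})\big)\mathbf{u} \\
&\quad + \big(\alpha(h(\mathbf{x})) - \alpha(h(\hat{\mathbf{x}}))\big) + L_ph(\mathbf{x})\mathbf{d}.
\end{align*}
The first line is the left-hand side of the nominal CBF condition evaluated at $\hat{\mathbf{x}}$. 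For the remaining ``uncertainty gap'' terms I would use $\lVert \mathbf{x}-\hat{\mathbf{x}}\rVert = \lVert\mathbf{e}\rVert \le \epsilon(\mathbf{z})$ (Assumption \ref{asmp: bounded_meas_error}) with the Lipschitz constants $\bar{L}_{L_fh}$, $\bar{L}_{L_gh}$, $\bar{L}_{\alpha\circ h}$ to lower-bound the three difference terms by $-\bar{L}_{L_fh}\epsilon(\mathbf{z})$, $-\bar{L}_{L_gh}\epsilon(\mathbf{z})\lVert\mathbf{u}\rVert$, and $-\bar{L}_{\alpha\circ h}\epsilon(\mathbf{z})$ respectively, and Cauchy--Schwarz with Assumption \ref{asmp: bounded_disturbance} to get $L_ph(\mathbf{x})\mathbf{d}\ge -\lVert L_ph(\mathbf{x})\rVert\gamma$. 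Collecting the $\mathbf{u}$-independent mismatch bounds into $a(\mathbf{z})$ and the $\mathbf{u}$-dependent one into $b(\mathbf{z})\lVert\mathbf{u}\rVert$ (each carrying the factor $\epsilon(\mathbf{z})$), matching the parametric functions in the statement, we obtain
\begin{equation*}
\dot{h}(\mathbf{x}) + \alpha(h(\mathbf{x})) \ge \Big[ L_fh(\hat{\mathbf{x}}) + L_gh(\hat{\mathbf{x}})\mathbf{u} + \alpha(h(\hat{\mathbf{x}})) - \big(a(\mathbf{z})+b(\mathbf{z})\lVert\mathbf{u}\rVert\big) - \lVert L_ph(\mathbf{x})\rVert\gamma \Big],
\end{equation*}
and the bracket is nonnegative for every $\mathbf{u}\in K_{\textrm{dmr}}$ by \eqref{eq: dmr_cbf_condition}.

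Since the three mismatch bounds are uniform over $\mathbf{e}\in\mathcal{E}(\mathbf{z})$ and $\mathbf{d}\in\mathcal{P}$, the inequality $\dot{h}(\mathbf{x})\ge -\alpha(h(\mathbf{x}))$ holds along the actual (disturbed, estimate-driven) trajectory irrespective of the realized uncertainty, and invoking the comparison lemma closes the argument. The main obstacle is the robustness bookkeeping rather than any hard estimate: one must verify that satisfying the nominal constraint at the single \emph{known} estimate $\hat{\mathbf{x}}$ genuinely forces the differential inequality at the \emph{unknown} true state for \emph{all} admissible $\mathbf{e},\mathbf{d}$ at once, which is precisely why the margins are introduced as worst-case Lipschitz/norm over-bounds rather than pointwise corrections; a secondary subtlety is that $\lVert L_ph(\mathbf{x})\rVert$ in the constraint is evaluated at the unknown true state, so in an implementation it is replaced by an over-approximation valid on the region of interest, which only strengthens the conclusion. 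Finally, one could alternatively observe that this lemma is the $m=1$ specialization of Lemma \ref{lemma: ho_dmr_cbf_ct} with $\psi_0=h$, and deduce it directly from there.
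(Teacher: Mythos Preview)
Your proposal is correct and follows essentially the same route as the paper's own proof: the paper first lower-bounds the disturbance term via $L_ph(\mathbf{x})\mathbf{d}\ge -\lVert L_ph(\mathbf{x})\rVert\gamma$, then reduces the measurement-error part to showing $\inf_{\mathbf{x}\in\mathcal{X}(\mathbf{z})}c(\mathbf{x},\mathbf{u})\ge 0$ and defers that step to \cite[Thm.~2]{dean2021guaranteeing}, which is exactly the Lipschitz add-and-subtract argument you spell out explicitly. Your grouping of the $\mathbf{u}$-independent Lipschitz terms into $a(\mathbf{z})$ and the $\mathbf{u}$-dependent one into $b(\mathbf{z})$ is the natural one (and indeed the statement's assignment of $\bar{L}_{L_gh}$ and $\bar{L}_{\alpha\circ h}$ appears to be a typo in the paper); your closing remark that this is the $m=1$ instance of Lemma~\ref{lemma: ho_dmr_cbf_ct} is also how the paper situates the result.
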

\begin{proof}
We begin by noting that to prove the forward invariance of $\mathcal{S}$, according to \cite{ames2016control} it is sufficient to show \eqref{eq: cbf-def}. Expanding the left-hand side of \eqref{eq: cbf-def}:
    \begin{multline*}
        \dot{h}(\mathbf{x}) + \alpha(h(\mathbf{x})) \\
        = L_fh(\mathbf{x}) + L_gh(\mathbf{x})\mathbf{u} + L_ph(\mathbf{x})\mathbf{d} + \alpha(h(\mathbf{x}))
    \end{multline*}
\begin{equation*}
    \begin{aligned}
        & \geq L_fh(\mathbf{x}) + L_gh(\mathbf{x})\mathbf{u} - \lVert L_ph(\mathbf{x}) \rVert \lVert \mathbf{d} \rVert + \alpha(h(\mathbf{x})) \\
        & \geq L_fh(\mathbf{x}) + L_gh(\mathbf{x})\mathbf{u} - \lVert L_ph(\mathbf{x}) \rVert \gamma + \alpha(h(\mathbf{x}))
    \end{aligned}
\end{equation*}
The first inequality comes from the fact that $L_ph(\mathbf{x})\mathbf{d} \geq - \lVert L_ph(\mathbf{x}) \rVert \lVert \mathbf{d} \rVert$ and the second inequality comes from Assumption \ref{asmp: bounded_disturbance}. Let $c(\mathbf{x}, \mathbf{u}) := L_fh(\mathbf{x}) + L_gh(\mathbf{x})\mathbf{u} + \alpha(h(\mathbf{x})) - \lVert L_ph(\mathbf{x}) \rVert \gamma$. Now to account for the measurement errors, it is sufficient to show that:
\begin{equation*}
    \inf_{\mathbf{x} \in \mathcal{X}(\mathbf{z})} c(\mathbf{x}, \mathbf{u}) \geq 0
\end{equation*}
where $\mathcal{X}(\mathbf{z}) := \{\mathbf{x} \in \mathbb{R}^n : \exists ~ \mathbf{e} \in \mathcal{E}(\mathbf{z}) \text{ s.t. } \hat{\mathbf{x}} = \mathbf{x} + \mathbf{e}(\mathbf{x})\}$, which represents all the \emph{actual} states the system may lie in given a measurement-estimate pair ($\mathbf{z}$, $\hat{\mathbf{x}}$). The above inequality aims to show that $\dot{h}(\mathbf{x}) + \alpha(h(\mathbf{x})) \geq c(\mathbf{x}) \geq  \inf_{\mathbf{x} \in \mathcal{X}(\mathbf{z})} c(\mathbf{x}, \mathbf{u}) \geq 0$, which will conclude the proof. The rest of the proof is similar to that of Thm.~2 in \cite{dean2021guaranteeing} and is omitted here for the sake of brevity.
\end{proof}

\subsubsection{For sampled-data systems}
In this section, we extend the DMR-CBF formulation proposed in the previous section to sampled-data systems.

\begin{definition}[SD-DMR-CBF]
\label{def: sd_dmr_cbf}
    Consider the closed-loop system \eqref{eq: cl_estim_sys} and a safe set $\mathcal{S}$ defined as \eqref{eq: safe_set}. Assume that the states are sampled at fixed time steps $T$, and the states and control inputs are constant between sample times. The continuously differentiable function $h: \mathbb{R}^n \rightarrow \mathbb{R}$ is a \emph{doubly robust} control barrier function for this sampled data system (\emph{SD-DMR-CBF}),  if there exists a suitable class-$\mathcal{K}$ function $\alpha$ such that there exists a control input $\mathbf{u} \in \mathcal{U}$ under which:
    \begin{multline}
    \label{eq: sd_dmr_condition}
        \sup_{\mathbf{u}_k \in \mathcal{U}} \Big [L_fh(\hat{\mathbf{x}}_k) + L_gh(\hat{\mathbf{x}}_k)\mathbf{u}_k + \alpha(h(\hat{\mathbf{x}}_k)) - \lVert L_ph(\mathbf{x}) \rVert \gamma \\ - (\bar{L}_{L_fh} + \bar{L}_{L_gh}\lVert \mathbf{u}_k \rVert + \bar{L}_{\alpha \circ h})v(\mathbf{z}) \Big] \geq 0~.
    \end{multline}
\end{definition}

Following Definition \ref{def: sd_dmr_cbf}, denote the set of control inputs satisfying \eqref{eq: sd_dmr_condition} by $K_{\textrm{sd-dmr}}$. 
\begin{lemma}
\label{lemma: sd_dmr_cbf}
    Consider the sampled-data system \eqref{eq: cl_estim_sys} with a constant sampling time $T$ and a safe set $\mathcal{S}$ defined as \eqref{eq: safe_set} and a SD-DMR-CBF function $h$ following Definition \ref{def: sd_dmr_cbf}. Assume that the functions $L_fh$, $L_gh$, $L_ph$, and $\alpha \circ h$ are Lipschitz continuous with Lipschitz constants $\bar{L}_{L_fh}$, $\bar{L}_{L_gh}$, $\bar{L}_{L_ph}$ and $\bar{L}_{\alpha \circ h} \in \mathbb{R}_{\geq 0}$. Provided Assumptions \ref{asmp: bounded_disturbance} and \ref{asmp: bounded_meas_error} hold, any locally Lipschitz continuous controller $\mathbf{u} \in K_{\textrm{sd-dmr}}$ renders the sampled data system \eqref{eq: cl_estim_sys} forward invariant with respect to $\mathcal{S}$.
\end{lemma}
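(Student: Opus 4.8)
The plan is to follow the same template as the proof of Lemma~\ref{lemma: distrb_cont_reg_cbf}, but to replace the pure measurement-error bound $\epsilon(\mathbf{z})$ on the distance between the true state and the estimate by the sampled-data bound $v(\mathbf{z})$ established in Lemma~\ref{lemma: state_bounds}, which additionally absorbs the drift of the true state away from the frozen estimate $\hat{\mathbf{x}}_k$ over a full sampling interval. In this sense the argument also parallels the proof of Thm.~2 in \cite{cortez2019control}, with an extra disturbance margin $\lVert L_ph(\mathbf{x})\rVert\gamma$ handled exactly as in Lemma~\ref{lemma: distrb_cont_reg_cbf}.

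First I would reduce forward invariance to a pointwise differential inequality along the true closed-loop trajectory. Fix a sampling interval $[t_k,t_{k+1})$ and a controller $\mathbf{u}_k\in K_{\textrm{sd-dmr}}$; let $\mathbf{x}(t)$ be the resulting solution of \eqref{eq: d_cont_sys} under the constant input $\mathbf{u}_k$ (existence and uniqueness following as in Lemma~\ref{lemma: state_bounds}). By a comparison-lemma (Nagumo-type) argument, as used for the CBF theorem following Definition~\ref{def: cbf}, it suffices to show
\[
\dot h(\mathbf{x}(t)) + \alpha(h(\mathbf{x}(t))) \;\ge\; 0,\qquad \forall\, t\in[t_k,t_{k+1}),
\]
since then $h(\mathbf{x}(t))\ge 0$ on $[t_k,t_{k+1}]$ whenever $h(\mathbf{x}(t_k))\ge 0$, and induction over $k$ starting from $\mathbf{x}(t_0)\in\mathcal{S}$ gives forward invariance of $\mathcal{S}$.

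Next I would expand $\dot h(\mathbf{x}(t)) = L_fh(\mathbf{x}(t)) + L_gh(\mathbf{x}(t))\mathbf{u}_k + L_ph(\mathbf{x}(t))\mathbf{d}$, lower-bound the disturbance term by $-\lVert L_ph(\mathbf{x}(t))\rVert\gamma$ via Assumption~\ref{asmp: bounded_disturbance}, and reduce to showing $L_fh(\mathbf{x}(t)) + L_gh(\mathbf{x}(t))\mathbf{u}_k + \alpha(h(\mathbf{x}(t))) - \lVert L_ph(\mathbf{x}(t))\rVert\gamma \ge 0$. The crux is to transfer this inequality from the unknown true state $\mathbf{x}(t)$ to the known estimate $\hat{\mathbf{x}}_k$: Lemma~\ref{lemma: state_bounds} gives $\lVert \mathbf{x}(t)-\hat{\mathbf{x}}_k\rVert \le v(\mathbf{z})$ for all $t\in[t_k,t_{k+1})$, so Lipschitz continuity of $L_fh$, $L_gh$, and $\alpha\circ h$ yields
\begin{multline*}
\big|L_fh(\mathbf{x}(t))-L_fh(\hat{\mathbf{x}}_k)\big| + \big\lVert L_gh(\mathbf{x}(t))-L_gh(\hat{\mathbf{x}}_k)\big\rVert\,\lVert\mathbf{u}_k\rVert \\ + \big|\alpha(h(\mathbf{x}(t)))-\alpha(h(\hat{\mathbf{x}}_k))\big| \;\le\; \big(\bar L_{L_fh}+\bar L_{L_gh}\lVert\mathbf{u}_k\rVert+\bar L_{\alpha\circ h}\big)\,v(\mathbf{z}).
\end{multline*}
Subtracting this worst-case gap, the quantity in question is bounded below by $L_fh(\hat{\mathbf{x}}_k)+L_gh(\hat{\mathbf{x}}_k)\mathbf{u}_k+\alpha(h(\hat{\mathbf{x}}_k))-\lVert L_ph(\mathbf{x})\rVert\gamma-(\bar L_{L_fh}+\bar L_{L_gh}\lVert\mathbf{u}_k\rVert+\bar L_{\alpha\circ h})v(\mathbf{z})$, which is precisely the bracket in the SD-DMR-CBF condition~\eqref{eq: sd_dmr_condition} and hence nonnegative because $\mathbf{u}_k\in K_{\textrm{sd-dmr}}$. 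Chaining these inequalities establishes the required differential inequality.

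The main obstacle is the bookkeeping around domains of validity: the Lipschitz constants $\bar L_{L_fh},\bar L_{L_gh},\bar L_{\alpha\circ h}$ and the bound $\Delta$ in $v(\mathbf{z})$ must be valid over a region containing the whole one-step reachable tube $\mathcal{R}(\hat{\mathbf{x}}_k,T)$, and one must argue by a standard continuity/bootstrap argument—using that the trajectory starts in $\mathcal{S}\subset\mathcal{D}$ and cannot exit $\mathcal{S}$ while the differential inequality holds—that $\mathbf{x}(t)$ indeed stays in that region so that all the bounds apply. There is also a minor notational point, consistent with Lemma~\ref{lemma: state_bounds}, that $\lVert L_ph(\mathbf{x})\rVert$ should be read as an upper bound of $\lVert L_ph(\cdot)\rVert$ over that tube rather than its value at a single point; these points are routine but need to be stated carefully.
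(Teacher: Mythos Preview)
Your proposal is correct and follows essentially the same route as the paper's proof: expand $\dot h(\mathbf{x})+\alpha(h(\mathbf{x}))$, absorb the disturbance via $-\lVert L_ph(\mathbf{x})\rVert\gamma$, transfer the remaining terms from $\mathbf{x}(t)$ to $\hat{\mathbf{x}}_k$ using the Lipschitz constants and the bound $\lVert\mathbf{x}(t)-\hat{\mathbf{x}}_k\rVert\le v(\mathbf{z})$ from Lemma~\ref{lemma: state_bounds}, and conclude from \eqref{eq: sd_dmr_condition}. Your added remarks on the domain of validity of the Lipschitz constants and on reading $\lVert L_ph(\mathbf{x})\rVert$ as a sup over the reachable tube are more careful than the paper itself, but do not change the argument.
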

\begin{proof}
We prove this lemma by following a procedure similar to \cite{cortez2019control}. First, from the proof of Lemma \ref{lemma: distrb_cont_reg_cbf}, we have
\begin{equation*}
    \begin{aligned}
        & \dot{h}(\mathbf{x}) + \alpha(h(\mathbf{x}))  \\
        & =  L_fh(\mathbf{x}) + L_gh(\mathbf{x})\mathbf{u}_k + L_ph(\mathbf{x})\mathbf{d} + \alpha(h(\mathbf{x})) \\
        & \geq  L_fh(\mathbf{x}) + L_gh(\mathbf{x})\mathbf{u}_k + \alpha(h(\mathbf{x})) - \lVert L_ph(\mathbf{x}) \rVert \gamma~.
    \end{aligned}
\end{equation*}
Given the sampled state estimate $\hat{\mathbf{x}}_k$ and the actual state $\mathbf{x}$, consider the following: 
\begin{equation*}
\begin{aligned}
    & L_fh(\mathbf{x}) + L_gh(\mathbf{x})\mathbf{u}_k + \alpha(h(\mathbf{x})) \\ 
    & = L_fh(\hat{\mathbf{x}}_k) + L_gh(\hat{\mathbf{x}}_k)\mathbf{u}_k + \alpha(h(\hat{\mathbf{x}}_k)) + \\ 
    & (L_fh(\mathbf{x}) - L_fh(\hat{\mathbf{x}}_k)) + (L_gh(\mathbf{x})-L_gh(\hat{\mathbf{x}}_k))\mathbf{u}_k + \\
    & (\alpha(h(\mathbf{x})) - \alpha(h(\hat{\mathbf{x}}_k))) \\[7pt]
    & \geq L_fh(\hat{\mathbf{x}}_k) + L_gh(\hat{\mathbf{x}}_k)\mathbf{u}_k + \alpha(h(\hat{\mathbf{x}}_k)) -\\ 
    & \lVert (L_fh(\mathbf{x}) - L_fh(\hat{\mathbf{x}}_k)) \rVert - \lVert (L_gh(\mathbf{x})-L_gh(\hat{\mathbf{x}}_k))\mathbf{u}_k \rVert - \\
    & \lVert (\alpha(h(\mathbf{x})) - \alpha(h(\hat{\mathbf{x}}_k))) \rVert \\[7pt] 
    & \geq L_fh(\hat{\mathbf{x}}_k) + L_gh(\hat{\mathbf{x}}_k)\mathbf{u}_k + \alpha(h(\hat{\mathbf{x}}_k)) - \\ 
    & (\bar{L}_{L_fh} + \bar{L}_{L_gh}\lVert \mathbf{u}_k \rVert + \bar{L}_{\alpha \circ h})\lVert \mathbf{x} - \hat{\mathbf{x}}_k \rVert \\[7pt]
    & \geq L_fh(\hat{\mathbf{x}}_k) + L_gh(\hat{\mathbf{x}}_k)\mathbf{u}_k + \alpha(h(\hat{\mathbf{x}}_k)) - \\ 
    & (\bar{L}_{L_fh} + \bar{L}_{L_gh}\lVert \mathbf{u}_k \rVert + \bar{L}_{\alpha \circ h})v(\mathbf{z})~.
\end{aligned}
\end{equation*}
Now, we have:
\begin{multline*}
         \dot{h}(\mathbf{x}) + \alpha(h(\mathbf{x})) \geq L_fh(\hat{\mathbf{x}}_k) + L_gh(\hat{\mathbf{x}}_k)\mathbf{u}_k + \alpha(h(\hat{\mathbf{x}}_k)) \\
         - \lVert L_ph(\mathbf{x}) \rVert \gamma - (\bar{L}_{L_fh} + \bar{L}_{L_gh}\lVert \mathbf{u}_k \rVert + \bar{L}_{\alpha \circ h})v(\mathbf{z})~.
\end{multline*}
Choosing a control input from $\mathbf{u}_k \in K_{\textrm{sd-dmr}}$, we get $\dot{h}(\mathbf{x}) + \alpha(h(\mathbf{x})) \geq 0$ which concludes the proof.
\end{proof}

\subsection{When $h$ is of relative degree $m \geq 1$}
\subsubsection{Proof of Lemma \ref{lemma: ho_dmr_cbf_ct}}
From Theorem \ref{th: ho_cbf}, it is sufficient to show that $\psi_m \geq 0$. Following similar steps as in Lemma \ref{lemma: distrb_cont_reg_cbf}, we first obtain:
\begin{equation*}
    \begin{split}
    \psi_m \geq L_f\psi_{m-1}(\mathbf{x}) + L_g\psi_{m-1}(\mathbf{x})\mathbf{u} +  \alpha_{m-1}(\psi_{m-1}(\mathbf{x})) - \\ 
    \lVert L_p\psi_{m-1}(\mathbf{x}) \rVert \gamma
    \end{split}
\end{equation*}
The rest of the proof is similar to \cite[Prop. 1]{oruganti2023safe}.

\subsubsection{Proof for Lemma \ref{lemma: ho_dmr_cbf_sd}}
The proof is similar to Lemma \ref{lemma: sd_dmr_cbf} with the difference being that we look to prove $\psi_r \geq 0$.

\begin{figure}[t]
    \centering
    \includegraphics[width=\columnwidth]{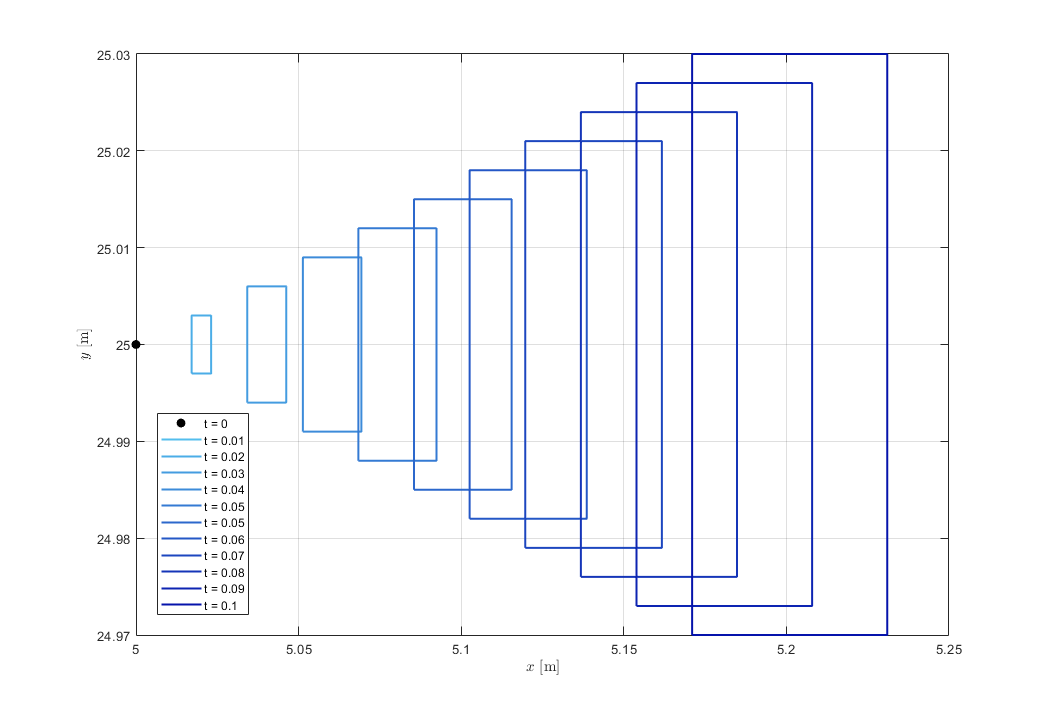}
    \caption{Reachable sets $\mathcal{R}(x_0, t_s)$ at different sampling intervals $t_s$ from an initial position of $x_0 = (5, 25)$}
    \label{fig: reach_sets}
\end{figure}

\begin{table}[t]
\centering
\caption{Robust margins for reachable sets estimated between sampling times $0.01$ to $0.1$. The size of the margins increases with increasing sampling time.}
\begin{tabular}{|l|l|}
\hline
Time & Margin  \\ \hline
0.01 & -31.980 \\ \hline
0.02 & -32.450 \\ \hline
0.03 & -32.735 \\ \hline
0.04 & -33.015 \\ \hline
0.05 & -33.290 \\ \hline
0.06 & -33.561 \\ \hline
0.07 & -33.087 \\ \hline
0.08 & -34.342 \\ \hline
0.09 & -34.593 \\ \hline
0.1  & -34.839 \\ \hline
\end{tabular}
\label{table: margins}
\end{table}

\subsection{Obtaining the margin in the numerical example}
Following the discussion provided in Section \ref{sec: mismatached_disturbance}, choosing $p_1 = q_1 = p_2 = q_2 = 1$ we have:
\begin{equation*}
    \begin{split}
        \psi_0(\mathbf{x}) & = h(\mathbf{x}) = (x-x_o)^2 + (y-y_o)^2 - D^2 \\[7pt]
        \psi_1(\mathbf{x}) & = \dot{\psi_0}(\mathbf{x}) + \psi_0(\mathbf{x}) \\
        & = L_f\psi_0(\mathbf{x}) + L_p\psi_0(\mathbf{x})\mathbf{d} + h(\mathbf{x}) \\
        & = (2v\cos\theta(x-x_o) + 2v\sin\theta(y-y_o)) + \\
        & (2d_1(x-x_o) + 2d_2(y-y_o)) + h(\mathbf{x}) \\[7pt]
        \psi_2(\mathbf{x}, \mathbf{d}) & = \dot{\psi_1}(\mathbf{x}, \mathbf{d}) + \psi_1(\mathbf{x}, \mathbf{d}) \\
        & \text{dropping dependence on $\mathbf{x}$ and $\mathbf{d}$ for readability}\\
        \psi_2 & = L_f\psi_1 + L_g\psi_1\mathbf{u} + L_p\psi_1\mathbf{d} + \psi_1 \\[7pt]
        & = \underbrace{L^2_fh + 2L_fh + 2h + L_gL_fh\mathbf{u}}_{\text{vanilla HOCBF}} + \\
        & \underbrace{L_pL_fh\mathbf{d} + L_f[L_ph\mathbf{d}] + L_p[L_ph\mathbf{d}]\mathbf{d} + 2L_ph\mathbf{d}}_{\text{effect of disturbance}} \\[7pt]
        & = (2v\cos\theta + 2d_1)v\cos\theta + (2v\sin\theta + 2d_2)v\sin\theta + \\
        & (2v\cos\theta(y-y_o) - 2v\sin\theta(x-x_o))u_1 + \\
        & (2(x-x_o)\cos\theta + 2(y-y_o)\sin\theta)u_2 + \\
        & (2v\cos\theta + 2d_1)d_1 + (2v\sin\theta + 2d_2)d_2 + \psi_1
    \end{split}
\end{equation*}
The $\mathtt{margin}$  function is:
\begin{equation*}
    \begin{split}
        \mathtt{margin} & = f_v(\mathbf{x}, \mathbf{u}) - f_v(\hat{\mathbf{x}}, \mathbf{u}) + f_d(\mathbf{x}, \mathbf{d})
    \end{split}
\end{equation*}
where \[f_v(\mathbf{x}, \mathbf{u}) = L^2_fh(\mathbf{x}) + 2L_fh(\mathbf{x}) + 2h(\mathbf{x}) + L_gL_fh(\mathbf{x})\mathbf{u} \] and 
\begin{equation*}
    \begin{split}
        f_d(\mathbf{x}, \mathbf{d}) & = L_pL_fh(\mathbf{x})\mathbf{d} + L_f[L_ph(\mathbf{x})\mathbf{d}] + \\
        & L_p[L_ph(\mathbf{x})\mathbf{d}]\mathbf{d} + 2L_ph(\mathbf{x})\mathbf{d}
    \end{split}
\end{equation*}

We note that the reachable sets obtained through the toolbox are the reachable sets \emph{at} time instant $t = (k+1)T$ whereas for the current approach we require a representation of all the states reachable \emph{through} time $t = kT$ to $t=(k+1)T$ (generally termed as the \emph{reachable tube}). But we observe numerically that for the current system, the margins estimated from the reachable set at time $(k+1)T$ from time $kT$ are larger than the margins at any time $t \in [kT, (k+1)T)$. A illustration of the reachable sets estimated at different sampling times is provided in Fig. \ref{fig: reach_sets}.

The robust margins obtained from these estimated reachable sets at every $0.01\textrm{s}$ sampling time $t_s$ from $0.01$ to $0.1$ are shown in Table \ref{table: margins}. Observe that the robust margin increases with increasing sampling time. A larger margin ensures that the system is still safe (albeit a bit conservatively) even between sampling times. Hence, we choose the robust margin obtained from the reachable set over-approximation estimated at time $T=0.1$.

\subsection{Trajectories that did not reach the goal}

\begin{figure}[h!]
    \centering
    \includegraphics[width=\columnwidth]{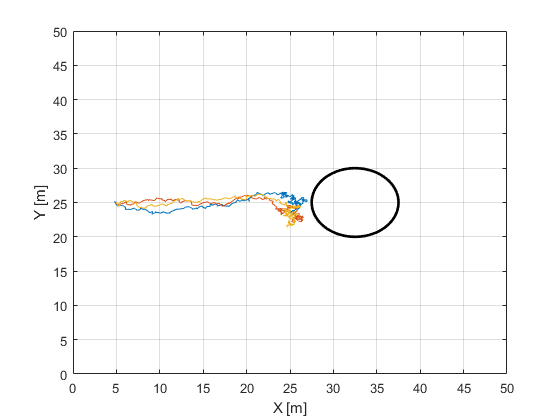}
    \caption{Trajectories that could not reach the goal within 25 seconds.}
    \label{fig: weird_traj}
\end{figure}

\end{document}